\def\blfootnote{\xdef\@thefnmark{}\@footnotetext}
\newtheorem{example}{Example}
\newtheorem{theorem}{Theorem}
\newtheorem{lemma}[theorem]{Lemma}
\newtheorem{corollary}[theorem]{Corollary}
\newtheorem{remark}[theorem]{Remark}
\newtheorem{observation}[theorem]{Observation}
\newcommand{\Mod}[1]{\ (\mathrm{mod}\ #1)}
\newcolumntype{L}[1]{>{\raggedright\arraybackslash}p{#1}}
\newcolumntype{C}[1]{>{\centering\arraybackslash}p{#1}}
\newcolumntype{R}[1]{>{\raggedleft\arraybackslash}p{#1}}
\def\BibTeX{{\rm B\kern-.05em{\sc i\kern-.025em b}\kern-.08em
    T\kern-.1667em\lower.7ex\hbox{E}\kern-.125emX}}
\begin{document}
\title{Efficient and Secure Substitution Box and Random Number Generators Over Mordell Elliptic Curves}
\author{Ikram Ullah, Naveed Ahmed Azam\footnote, and Umar Hayat
\thanks{
}
\\
\thanks{
 Corresponding author}
\thanks{N. A. Azam is a research fellow at the Department of Applied Mathematics and Physics, Graduate School of Informatics, Kyoto University,
Japan. This research is partially funded through JSPS KAKENHI Grant Number 18J23484. (e-mail: azam@amp.i.kyoto-u.ac.jp).}
\thanks{
}
}

{}

\maketitle

\begin{abstract}
\textnormal{Elliptic curve cryptography has received great attention in recent years due to its high resistance against modern cryptanalysis.
The aim of this article is to present efficient generators to generate substitution boxes (S-boxes) and pseudo random numbers which are essential for many well-known cryptosystems.
These generators are based on a special class of ordered Mordell elliptic curves.
Rigorous analyses are performed to test the security strength of the proposed generators.
For a given prime, the experimental results reveal that the proposed generators are capable of generating a large number of distinct, mutually uncorrelated, cryptographically strong S-boxes and sequences of random numbers in low time and space complexity.
Furthermore, it is evident from the comparison that the proposed schemes can efficiently generate secure S-boxes and random numbers as compared to some of the well-known existing schemes over different mathematical structures.}
\end{abstract}

\begin{IEEEkeywords}
\centering{\textnormal{Mordell Elliptic Curve, Substitution Box, Random Number, Ordered Set}}
\end{IEEEkeywords}
\section{Introduction}\label{Intro}
\label{sec:introduction}
Recent advancements in the field of communication systems and computational methods necessitate improvements in the traditional cryptosystems. Substitution box (S-box) and pseudo random number generator (PRNG) play an important role in many cryptosystems such as Data Encryption Standard (DES)~\cite{DES}, Advanced Encryption Standard (AES)~\cite{AES}, Twofish security system~\cite{F2}, Blowfish cryptosystem~\cite{BF},  International Data Encryption Algorithm (IDEA)~\cite{IDEA} and the cryptosystems developed in~\cite{umar2, NJia, XC, SS}.
It has been pointed out by many researchers that the security of a cryptosystem can be improved by using dynamic S-boxes instead of a single static S-box, see for example~\cite{RB, SK, MM, AS, MG, KK}.
This fact necessitates the development of new S-box generators which can generate a large number of distinct and mutually uncorrelated S-boxes with high cryptographic properties in low time and space complexity~\cite{Azam}.

Many researchers have proposed improved S-box generators and PRNGs to enhance the security of data against modern cryptanalysis.
These improvements are mainly based on finite field arithmetic and chaotic systems.
Khan and Azam~\cite{MN1, MN2} developed two different methods to generate 256 cryptographically strong S-boxes by using Gray codes, and affine mapping.
Jakimoski and Kocarev~\cite{Jaki} used chaotic maps to develop a four-step method for the generation of an S-box. $\rm \ddot{O}$zkaynak and $\rm \ddot{O}$zer~\cite{Ozkaynak2} introduced a new method based on a chaotic system to develop secure S-boxes. Unlike the traditional use of chaotic maps, Wang et al.~\cite{YW} proposed an efficient algorithm to construct  S-boxes using gnetic algorithm and chaotic maps. Yin et al.~\cite{Yin} proposed an S-box design technique using iteration of the chaotic maps. Tang and Liao~\cite{Tang} constructed S-boxes based on an iterating discretized chaotic map.
Lambi\'c~\cite{Lambic} used a special type of discrete chaotic map to obtain bijective S-boxes. \"Ozkaynak et al.~\cite{Ozkaynak} proposed a new S-box based on a fractional order chaotic Chen system.
 Zhang et al.~\cite{IChing} used I-Ching operators for the construction of highly non-linear S-boxes, and the proposed approach is very efficient.

Similarly, chaotic systems are used to generate pseudo random numbers (PRNs), see for example~\cite{MF, VP, CG, TS, ZF}. Francois et al.~\cite{MF} presented a PRNG based on chaotic maps to construct multiple  key sequences. Patidar and Sud~\cite{VP} designed a PRNG with optimal cryptographic properties using chaotic logistic maps. Guyeux et al.~\cite{CG} developed a chaotic PRNG with the properties of topological chaos which offers sufficient security for cryptographic purposes. Stojanovski and Kocarev~\cite{TS} analyzed a PRNG based on a piecewise linear one dimensional chaotic map. Fan et al.~\cite{ZF} proposed a PRNG using generalized Henon map, and a novel technique is used to improve the characteristics of the proposed sequences.

It has been pointed out by Jia et al.~\cite{NJia}
that the PRNs generated by a chaotic system can have small period due to the hardware computation issues and revealed that elliptic curve (EC) has high security than chaotic system.
However, the computation over ECs is usually performed by group law which is computationally inefficient.
Hayat and Azam~\cite{umar2}
proposed an efficient S-box generator and a PRNG based on ECs by using a total order as an alternative to group law.
This S-box generator is efficient than the other methods over ECs, however their time and space complexity are $\mathcal{O}(p^2)$ and $\mathcal{O}(p)$, respectively, where $p$ is the prime of the underlying EC.
Furthermore the S-box generator does not guarantee the generation of an S-box.
The PRNG proposed by Hayat and Azam~\cite{umar2} also takes $\mathcal{O}(p^2)$ and $\mathcal{O}(p)$ time and space, respectively, to generate a sequence of pseudo random numbers (SPRNs) of size $m \leq p$.
%
%
Azam et al.~\cite{Azam}
proposed an improved S-box generation method to generate bijective S-boxes by using ordered Mordell elliptic curves (MECs).
The main advantage of this method is that its time and space complexity are $\mathcal{O}(mp)$ and $\mathcal{O}(m)$, respectively, where $m$ is the size of an S-box.
Azam et al.~\cite{ikram}
proposed another S-box generator to generate $m \times n$, where $m \leq n$ injective S-boxes which can generate a large number of distinct and mutually uncorrelated S-boxes by using the concept of isomorphism on ECs.
The time and space complexity of this method are $\mathcal{O}(2^np)$ and $\mathcal{O}(2^n)$, where $n \leq p$ and is the size of co-domain of the resultant S-box.
A common draw back of these S-box generators is that the cryptographic properties of their optimal S-boxes are far from the theoretical optimal values.

The aim of this paper is to propose an efficient S-box generator and a PRNG based on an ordered MEC to generate a large number of distinct, mutually uncorrelated S-boxes and PRNs with optimal cryptographic properties in low time and space complexity to overcome the above mentioned drawbacks.
The rest of the paper is organized as follows:
In Section~\ref{Prel} basic definitions are discussed. The proposed S-box generator is described in Section~\ref{Cons}. Section~\ref{Anal} consists of security analysis and comparison of the S-box generator.
The proposed algorithm for generating PRNs and some general results are given in Section~\ref{RNG}.
The proposed SPRNs are analyzed in Section~\ref{RA}, while Section~\ref{Con} concludes the whole paper.

\section{Preliminaries}\label{Prel}
Throughout this paper, we denote a finite set $\{0, 1, \ldots, m-1\}$ simply by $[0, m-1]$.
A finite field over a prime number $p$ is the set $[0, p-1]$ denoted by $\mathbb{F}_{p}$ with binary operations addition and multiplication under modulo $p$.
A non-zero integer $\alpha \in \mathbb{F}_{p}$ is said to be {\it quadratic residue} (QR) if there exists an integer $\beta \in \mathbb{F}_p$ such that $\alpha\equiv \beta^2 \pmod p$.
A non-zero integer in $\mathbb{F}_{p}$ which is not QR is said to be {\it quadratic non-residue} (QNR).

For a prime $p$, non-negative $a\in \mathbb{F}_{p}$ and positive $b\in \mathbb{F}_{p}$, the EC $E_{p,a,b}$ over a finite field $\mathbb{F}_{p}$ is defined to be the collection of identity element $\infty$ and ordered pairs $(x, y) \in \mathbb{F}_p \times \mathbb{F}_p$ such that \[ y^{2}\equiv x^{3}+ax+b \pmod p.\]
In this setting, we call $p,a$ and $b$ the {\it parameters} of $E_{p,a,b}$.
The number $\#E_{p,a,b}$ of all such points can be calculated using Hasse's
theorem~\cite{Schoof} \[|\#E_{p,a,b}-p-1|\leq 2\sqrt{p}.\]
Two ECs $E_{p,a,b}$ and $E_{p,a',b'}$ over $\mathbb{F}_{p}$ are isomorphic if and only if there exists a non-zero integer $t\in \mathbb{F}_{p}$ such that $a't^4\equiv a \pmod p$ and $b't^6\equiv b \pmod p$.
In this case, $t$ is called {\it isomorphism parameter} between the ECs $E_{p,a,b}$ and $E_{p,a',b'}$.
For an isomorphism parameter $t$, each point $(x,y)\in E_{p,a,b}$ is mapped on $(t^2x,t^3y)\in E_{p,a',b'}$.
Note that an isomorphism is an equivalence relation on all ECs over $\mathbb{F}_{p}$, and therefore all ECs can be divided into equivalence classes~\cite{ikram}. For the sake of simplicity we represent an arbitrary class by $\mathcal{C}_i$ and assume that the class $\mathcal{C}_1$ contains the EC $E_{p,0,1}$.
A non-negative integer $b\in \mathbb{F}_{p}$ such that $E_{p,a,b}\in \mathcal{C}_i$ is called {\it representative} $R(p, \mathcal{C}_i)$ of the class $\mathcal{C}_i$.
Clearly, it holds that $R(p,\mathcal{C}_1) = 1$.

An EC $E_{p,a,b}$ with $a = 0$ is said to be a {\it Mordel elliptic curve}.
The following theorem is from~\cite[6.6 (c), p.~188]{Schoof}.
\begin{theorem}\label{mordel}
Let $p>3$ be a prime such that $p\equiv 2\pmod{3} $.
For each non-zero $b\in \mathbb{F}_{p}$, the MEC $E_{p,0,b}$ has exactly $p+1$
distinct points, and has each integer from $[0,p-1]$ exactly once as a $y$%
-coordinate.
\end{theorem}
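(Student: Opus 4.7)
The plan is to exploit the hypothesis $p \equiv 2 \pmod 3$ to show that the cubing map on $\mathbb{F}_p$ is a bijection, and then to count affine points on $E_{p,0,b}$ by parametrizing them by their $y$-coordinates.

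First I would observe that $p \equiv 2 \pmod 3$ forces $p - 1 \equiv 1 \pmod 3$, so $\gcd(3,\,p-1)=1$. Since $\mathbb{F}_p^{*}$ is cyclic of order $p-1$, the $k$-th power map is a bijection on $\mathbb{F}_p^{*}$ whenever $\gcd(k,\,p-1)=1$; applying this with $k=3$, and noting that $0^3=0$, the map $\varphi:\mathbb{F}_p \to \mathbb{F}_p$ defined by $\varphi(x)=x^3 \bmod p$ is a bijection on all of $\mathbb{F}_p$.

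Next I would use this bijection to count the affine points. The defining congruence of $E_{p,0,b}$ can be rewritten, for each fixed $y \in [0,p-1]$, as
\[
x^3 \equiv y^2 - b \pmod p.
\]
Because $\varphi$ is a bijection, the right-hand side is hit by exactly one $x \in \mathbb{F}_p$, so there is a unique affine point of $E_{p,0,b}$ with that particular $y$-coordinate. Ranging $y$ over $[0,p-1]$ therefore produces exactly $p$ distinct affine points, each $y$-value appearing exactly once. Adding the identity element $\infty$ yields the claimed total of $p+1$ points.

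The only real subtlety is the standard group-theoretic fact that the cubing map is a bijection on $\mathbb{F}_p^{*}$ when $\gcd(3,\,p-1)=1$; once that is in hand, the rest is bookkeeping over the $p$ possible values of $y$. There is no obstacle of substance, and no case analysis is needed since $b \neq 0$ plays no role beyond placing us on an elliptic (rather than singular) curve, which is irrelevant to the counting argument itself.
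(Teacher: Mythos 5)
Your proof is correct. Note that the paper does not prove this statement at all---it is imported verbatim from the cited reference (Washington, \emph{Elliptic Curves: Number Theory and Cryptography}, 6.6(c))---and your argument is exactly the standard one given there: $p\equiv 2\pmod 3$ gives $\gcd(3,p-1)=1$, so cubing is a bijection on $\mathbb{F}_p$, and solving $x^3\equiv y^2-b\pmod p$ uniquely for each of the $p$ values of $y$ yields $p$ affine points plus $\infty$. The one remark worth keeping in mind is that $b\neq 0$ is precisely the nonsingularity condition $27b^2\not\equiv 0\pmod p$ for a curve with $a=0$, so while it is irrelevant to the point count, it is what makes $E_{p,0,b}$ an elliptic curve in the first place, as you correctly observe.
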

\noindent
Furthermore, by~\cite[Lemma 1]{ikram} it follows that there are only two classes of MECs when $p\equiv 2 \pmod 3$.
Henceforth, we denote an MEC $E_{p,0, b}$ by simply $E_{p, b}$ and the term MEC stands for an MEC $E_{p, b}$ such that $p\equiv 2 \pmod 3$.

For a subset $A$ of $[0, p-1]$ and an ordered MEC $(E_{p, b}, \prec)$, we define a total order $\prec^*$ on $A$ w.r.t. the ordered MEC such that for any two elements $a_1, a_2 \in A$ it holds that $a_1 \prec^* a_2$ if and only if $(x_1, a_1) \prec (x_2, a_2)$.
For any two non-negative integers $p$ and $m$ such that $1 \leq m \leq p$, we define an {\em $(m, p)$-complete set} to be a set $Q$ of size $m$ such that for each element $ q \in Q$, it holds that $0\leq q \leq p-1$, 
and no two elements of $Q$ are congruent with each other under modulo $m$, i.e., for each $q,q' \in Q$, it holds that $(q \not\equiv q') \pmod{m}$.
We denote an ordered set $A$ with a total order $\prec$ by an ordered pair $(A, \prec)$.
Let $(A, \prec)$ be an ordered set, for any two elements $a,a' \in A$ such that $a \prec a'$, we read as $a$ is smaller than $a'$ or $a'$ is greater than $a$ w.r.t. the order $\prec$.
For simplicity, we represent the elements of $(A, \prec)$ in the form of a non-decreasing sequence and $a_i$ denotes the $i$-th element of the ordered set in its sequence representation.
For an ordered MEC $(E_{p, b},\prec)$ and an $(m,p)$-complete set $Y$, we define the \textit{ordered $(m,p)$-complete set} $Y^*$ with ordering $\tilde\prec$ due to $Y$ and $\prec$ such that for any two element $y_1, y_2 \in Y^*$ with $y_1' \equiv y_1 \pmod{m}$ and $y_2' \equiv y_2 \pmod m$, where $y_1', y_2' \in Y$, it holds that $y_1 \tilde\prec y_2$ if and only if $(x_1, y_1') \prec (x_2, y_2')$. 

For a given MEC $E_{p, b}$, Azam et al.~\cite{Azam} defined three typical type of orderings \textit{natural} $N$, \textit{diffusion} $D$ and \textit{modulo diffusion} $M$ ordering  based on the coordinates of the points on $E_{p, b}$ as\\
$(x_{1},y_{1})N(x_{2},y_{2}) \Leftrightarrow
\begin{cases}
{\rm either \ } x_{1}<x_{2}, {\rm \ or} \\[5pt]
x_{1}=x_{2} {\rm \ and \ } y_{1}<y_{2}; %
\end{cases}
\\ [5pt]
(x_{1},y_{1})D(x_{2},y_{2}) \Leftrightarrow
\begin{cases}
{\rm either \ } x_{1}+y_{1}<x_{2}+y_{2}, {\rm \ or} \\[5pt]
x_{1}+y_{1}=x_{2}+y_{2} {\rm \ and \ } x_{1}<x_{2};
\end{cases}
\\[5pt]
(x_{1},y_{1})M(x_{2},y_{2}) \Leftrightarrow \\[5pt]
\begin{cases}
{\rm either \ } (x_{1}+y_{1}<x_{2}+y_{2})\pmod p, {\rm \ or} \\[5pt]
x_{1}+y_{1}\equiv x_{2}+y_{2}\pmod p
{\rm \ and \ } x_{1}<x_{2}.
\end{cases}
$
%
%
\section{The Proposed S-box Construction Scheme}\label{Cons}
For an ordered MEC $(E_{p,b}, \prec)$, an $(m,p)$-complete set $Y$ and a non-negative integer $k \leq m -1$, we define an \textit{$(m,p)$-complete S-box} $\sigma(p,b,\prec,Y, k)$ due to $(E_{p,b}, \prec)$, $Y$ and $k$ to be a mapping from $[0, m-1]$ to $(Y^*, \tilde\prec)$ such that $\sigma(p,b,\prec,$ $Y,k)(i) = y_{(i +k)\pmod{m}}\pmod{m}$, where $y_{(i +k)\pmod{m}}$ is the $(i +k)\pmod{m}$-th element of the ordered $(m,p)$-complete set $(Y^*, \tilde\prec)$ in its sequence representation.
\begin{lemma}\label{bijective}
For any ordered MEC $(E_{p,b}, \prec)$, an $(m,p)$-complete set $Y$ and a non-negative integer $k \leq m -1$, the $(m,p)$-complete S-box $\sigma(p,b,\prec,Y, k)$ is a bijection.
\end{lemma}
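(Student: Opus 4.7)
The plan is to exhibit $\sigma(p,b,\prec,Y,k)$ as the composition of two elementary bijections on $[0,m-1]$, so that the lemma reduces to pigeonhole plus the observation that translation is a bijection on $\mathbb{Z}/m\mathbb{Z}$.

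First I would unpack the $(m,p)$-complete set hypothesis. By definition $Y$ has $m$ elements drawn from $[0,p-1]$, no two of which are congruent modulo $m$. Since $[0,m-1]$ contains exactly $m$ residue classes modulo $m$, pigeonhole forces the reduction map $y \mapsto y \bmod m$ to send $Y$ bijectively onto $[0,m-1]$. The ordered set $Y^*$ has the same underlying elements as $Y$, merely listed in non-decreasing order with respect to $\tilde\prec$; writing $Y^* = \{y_0, y_1, \ldots, y_{m-1}\}$ in its sequence representation, this means $\phi \colon j \mapsto y_j \bmod m$ is a bijection from $[0,m-1]$ to $[0,m-1]$.

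Next I would observe that the cyclic shift $\tau_k \colon i \mapsto (i+k) \bmod m$ is a bijection on $[0,m-1]$, being translation in the additive group $\mathbb{Z}/m\mathbb{Z}$ with inverse $i \mapsto (i-k) \bmod m$. Since the definition gives $\sigma(p,b,\prec,Y,k)(i) = y_{(i+k)\bmod m} \bmod m = \phi(\tau_k(i))$, the map $\sigma(p,b,\prec,Y,k) = \phi \circ \tau_k$ is a composition of two bijections on $[0,m-1]$, hence itself a bijection.

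I do not anticipate any real obstacle: the whole argument hinges on correctly invoking the completeness condition on $Y$ to guarantee that each residue class modulo $m$ appears exactly once among $y_0, \ldots, y_{m-1}$, and this is a one-line pigeonhole step since $|Y| = m$ equals the number of residue classes. The cyclic-shift portion and the composition-of-bijections conclusion are then immediate, so no case analysis or appeal to properties of the ordering $\prec$ on $E_{p,b}$ is needed beyond the fact that $\tilde\prec$ produces a well-defined sequence representation of $Y^*$.
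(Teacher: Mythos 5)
Your proof is correct and rests on the same key fact as the paper's: the $(m,p)$-completeness of $Y$ forces the reduction $y \mapsto y \bmod m$ to be injective on $Y$, hence a bijection onto $[0,m-1]$ by cardinality. The paper packages this as a proof by contradiction of injectivity followed by the observation that a one-one map between finite sets of equal size is a bijection, whereas you factor the map as $\phi \circ \tau_k$ with $\tau_k$ the cyclic shift; the difference is purely presentational, not a different route.
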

\begin{proof}Suppose on contrary that there exist $i,j \in [0, m-1]$ such that $\sigma_{p,b,\prec,Y,k}(i) = \sigma_{p,b,\prec,Y,k}(j)$.
This implies that $y_i' = y_j'$, where $y_i' \equiv y_i\pmod{m}$ and $y_j' \equiv y_j\pmod{m}$ and $y_i', y_j' \in Y$.
This leads to a contradiction to the fact that $Y$ is an $(m,p)$-complete set.
Thus, $\sigma(p,b,\prec,Y, k)$ is a one-one mapping on the finite sets of same order, and hence it is a bijection.
\end{proof}

For prime $p = 52511$ and $m = 256$, an  $(256,52511)$-complete subset $Y$ of the MEC is given in Table~\ref{FT}, while the $(256,52511)$-complete S-box $\sigma(52511, 1, N, Y, 0)$ due to the ordered MEC $(E_{52511,1}, N)$ is presented in Table~\ref{SN} in hexagonal format.
Each entry of Table~\ref{SN} is obtained from the corresponding entry of Table~\ref{FT} by applying modulo $256$ operator.
\begin{table*}[htb]
\caption{The  $(256,52511)$-complete set $Y$}
\label{FT}%
\centering
\scalebox{0.99}
{\begin{tabular}{llllllllllllllll}
\hline
 A792 & 4A5C & 9AF5 & 01C5 & 421  & 814D & B3A2 & 5CA3 & 834B & 9F90 & 1C7D & BF6A & 0A11 & 7A9D & 9E91 & 6135 \\
1D8D & 9425 & 3F36 & 7954 & 1E1E & 5B47 & 1420 & 71CA & 8089 & 80C4 & 3150 & 12EF & 36C3 & BEA7 & 6170 & 2256 \\
298A & 005E & 8032 & 0F00 & 270A & 51D0 & 421C & 942  & 6BDF & 2848 & 87FC & 4418 & 2BFB & 121B & 6F2D & 11CC \\
886  & 8E53 & 6BD2 & AC14 & B65B & 062B & 37F1 & B627 & 47D4 & 59A6 & 2878 & 7D76 & 76CB & 7005 & 0CBE & 8F8F \\
609E & 7A83 & 61F4 & 23C0 & 3AC2 & 3502 & BC40 & 88DE & 3645 & 2EEC & B8B3 & BBD9 & 84D5 & 165F & C061 & 0BE8 \\
AE34 & 6431 & 906B & 15E4 & BE74 & 5423 & 10AA & 4D75 & B037 & 556F & 6F99 & 242E & 31AD & C9F9 & A679 & 3F82 \\
749A & 7F55 & 9267 & AF29 & 33BC & 1A0E & 270D & 2312 & 7857 & B730 & 5C17 & AAA4 & 7DF7 & 698B & 7FDB & 66AC \\
A203 & B46D & 7DE9 & 7E80 & 72B2 & 97E1 & 70D1 & 18D8 & 76ED & 4677 & 7A4E & 7F3F & 96B8 & 8A94 & 91D3 & 8295 \\
6E0F & 7A0B & 221A & 11C7 & A7A1 & 1563 & 33BB & 15EA & 62BA & 0EB9 & 8041 & 6998 & C260 & 127C & 0B2F & 38AE \\
7626 & 12A8 & 50D6 & B0CE & 67CD & 766C & 22BD & 109F & 4E4C & CBF2 & 5CA5 & 2528 & 1964 & 4724 & CAAF & 966  \\
A587 & AE01 & 5584 & 0A3D & 3859 & 7504 & 063E & 5251 & 767B & 0AFF & 50E7 & 7765 & 2688 & BC58 & 972A & 0EE6 \\
295A & 0BB6 & 4B43 & 5906 & 476E & C5C9 & 20A9 & 45AB & C57A & 1D07 & 694A & B57F & 0D15 & 1CBF & ABFA & C3B5 \\
2096 & B138 & A671 & A262 & BAF3 & 4CB1 & 054F & C5DD & 9B85 & C144 & BBDC & 7969 & C85D & 91C6 & 0A49 & 9DE2 \\
C6DA & 278C & 1C13 & 29D7 & 708E & 827E & 0FC8 & 4FEB & 4BEE & 1F97 & 20FE & 26E0 & 0E93 & 4E9C & A2E5 & 841D \\
ADF0 & B273 & A6E3 & 440C & AB08 & 3952 & 103A & A472 & C42C & 36CF & 9768 & 6809 & 0E22 & C439 & 291F & AEFD \\
A7C1 & C23B & 2FF6 & A046 & 3BB4 & ACA0 & 5A9B & 95F8 & 7919 & 4381 & A9B0 & 7110 & 7433 & 1816 & 39B7 & 1A3C\\ \hline
\end{tabular}}
\end{table*}

\begin{table}[H]
\caption{The  proposed $(256,52511)$-complete S-box $\sigma(52511, 1, N, Y, 0)$}
\label{SN}%
\centering
\resizebox{\columnwidth}{!}
{\begin{tabular}{cccccccccccccccc}
\hline
92 & 5C & F5 & C5 & 21 & 4D & A2 & A3 & 4B & 90 & 7D & 6A & 11 & 9D & 91 & 35 \\
8D & 25 & 36 & 54 & 1E & 47 & 20 & CA & 89 & C4 & 50 & EF & C3 & A7 & 70 & 56 \\
8A & 5E & 32 & 0  & 0A & D0 & 1C & 42 & DF & 48 & FC & 18 & FB & 1B & 2D & CC \\
86 & 53 & D2 & 14 & 5B & 2B & F1 & 27 & D4 & A6 & 78 & 76 & CB & 5  & BE & 8F \\
9E & 83 & F4 & C0 & C2 & 2  & 40 & DE & 45 & EC & B3 & D9 & D5 & 5F & 61 & E8 \\
34 & 31 & 6B & E4 & 74 & 23 & AA & 75 & 37 & 6F & 99 & 2E & AD & F9 & 79 & 82 \\
9A & 55 & 67 & 29 & BC & 0E & 0D & 12 & 57 & 30 & 17 & A4 & F7 & 8B & DB & AC \\
3  & 6D & E9 & 80 & B2 & E1 & D1 & D8 & ED & 77 & 4E & 3F & B8 & 94 & D3 & 95 \\
0F & 0B & 1A & C7 & A1 & 63 & BB & EA & BA & B9 & 41 & 98 & 60 & 7C & 2F & AE \\
26 & A8 & D6 & CE & CD & 6C & BD & 9F & 4C & F2 & A5 & 28 & 64 & 24 & AF & 66 \\
87 & 1  & 84 & 3D & 59 & 4  & 3E & 51 & 7B & FF & E7 & 65 & 88 & 58 & 2A & E6 \\
5A & B6 & 43 & 6  & 6E & C9 & A9 & AB & 7A & 7  & 4A & 7F & 15 & BF & FA & B5 \\
96 & 38 & 71 & 62 & F3 & B1 & 4F & DD & 85 & 44 & DC & 69 & 5D & C6 & 49 & E2 \\
DA & 8C & 13 & D7 & 8E & 7E & C8 & EB & EE & 97 & FE & E0 & 93 & 9C & E5 & 1D \\
F0 & 73 & E3 & 0C & 8  & 52 & 3A & 72 & 2C & CF & 68 & 9  & 22 & 39 & 1F & FD \\
C1 & 3B & F6 & 46 & B4 & A0 & 9B & F8 & 19 & 81 & B0 & 10 & 33 & 16 & B7 & 3C \\ \hline
\end{tabular}}
\end{table}


Next we present two efficient algorithms to compute an $(m,p)$-complete S-box.
The first algorithm is based on Theorem~\ref{mordel}.
\begin{algorithm}[H]
	\caption{{\bf Constructing the proposed S-box } } \label{algo_sbox_1}
\begin{algorithmic}[1]
\REQUIRE An ordered MEC $(E_{p,b}, \prec)$, an $(m,p)$-complete set~$Y$ and a non-negative integer $k\leq m-1$.
\ENSURE The proposed $(m,p)$-complete S-box $\sigma(p,b,\prec,$ $Y,k)$.

\STATE $A:=\emptyset$; /*A set containing the points of $E_{p,b}$ with $y$-coordinates from $Y$*/

\FORALL {$y \in Y$}
\STATE $B:= [0, p-1]$; $t:= $ {\tt No};
	\WHILE{$q=$ {\tt No}}
		\STATE $x \in B$;
		\IF{$(x^3 +b - y^2 \equiv 0) \pmod p$}
			\STATE $q:=$ {\tt Yes}; $A:= A \cup \{(x,y)\}$
		\ENDIF;
		\STATE $B:=B\setminus \{x\}$
	\ENDWHILE
\ENDFOR;
\STATE Sort $Y^* := [0, m-1]$ w.r.t. the element of $A$, i.e., sort $Y^*$ w.r.t. the ordering $\tilde\prec$;
\STATE $\pi := (\pi(0), \pi(1), \ldots, \pi(m-1))$;
\FORALL {integer $i \in [0, m-1]$}
	\STATE $\pi(i) := y_{(i+k)\Mod{m}}\pmod{m}$, where $y_{(i +k)\pmod{m}}$ is the $(i +k)\pmod{m}$-th element of the ordered $(m,p)$-complete set $(Y^*, \tilde\prec)$
\ENDFOR;
\STATE Output $\pi$ as the $(m,p)$-complete S-box $\sigma(p,b,\prec,Y, k)$.
	
\end{algorithmic}
\end{algorithm}
\begin{lemma}\label{imp_1}
For an ordered MEC $(E_{p,b}, \prec)$, an $(m,p)$-complete set $Y$ and a non-negative integer $k\leq m-1$, the $(m,p)$-complete S-box $\sigma(p,b,\prec,Y, k)$ can be computed in $\mathcal{O}(mp)$ time and $\mathcal{O}(m)$ space by using Algorithm~\ref{algo_sbox_1}.
\end{lemma}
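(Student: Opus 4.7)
The plan is to analyze the running time and working memory of Algorithm~\ref{algo_sbox_1} step by step, leaning on Theorem~\ref{mordel} to bound the inner search. First I would observe that the outer \textbf{for} loop executes exactly $m$ times, once per element of the $(m,p)$-complete set $Y$. For each fixed $y \in Y$, Theorem~\ref{mordel} guarantees that there is exactly one $x \in [0,p-1]$ with $(x,y) \in E_{p,b}$, so the inner \textbf{while} loop is guaranteed to terminate, and in the worst case after scanning all $p$ candidate values of $x$. Each candidate test only checks the congruence $x^3 + b - y^2 \equiv 0 \pmod p$, which costs $\mathcal{O}(1)$ arithmetic operations on numbers modulo $p$. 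Hence the construction of the auxiliary set $A$ takes $\mathcal{O}(mp)$ time.

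Next I would bound the remaining steps. The sorting of $Y^*$ under the induced order $\tilde\prec$ can be carried out with $\mathcal{O}(m \log m)$ comparisons, where each comparison consults the pair in $A$ associated with the current element of $Y$ and thus costs $\mathcal{O}(1)$; since $m \log m = \mathcal{O}(mp)$, this step is absorbed into the dominant cost. The final loop that writes $\pi(i) = y_{(i+k)\bmod m}\bmod m$ for $i \in [0, m-1]$ is plainly $\mathcal{O}(m)$. Adding all contributions yields the claimed $\mathcal{O}(mp)$ time bound. On the space side, the stored objects $A$, $Y^*$ and $\pi$ each have size $m$, giving $\mathcal{O}(m)$ storage.

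The main obstacle I anticipate is the set $B$, which the pseudocode initializes as $[0,p-1]$ and which naively would consume $\mathcal{O}(p)$ memory. My plan for handling this is to argue that $B$ is used purely as a pool of candidate $x$-values that are consumed in some fixed traversal order; it can therefore be realized implicitly by a single integer counter that sweeps $[0,p-1]$ in increasing order, using only $\mathcal{O}(1)$ additional space. The correctness of this reformulation is an immediate consequence of Theorem~\ref{mordel}, because the guaranteed existence of a suitable $x$ for every $y \in Y$ ensures that the sweep terminates successfully before exhausting $[0,p-1]$. With this implementation, the total auxiliary space is $\mathcal{O}(m)$, completing the argument.
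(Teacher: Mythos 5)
Your proof is correct and follows essentially the same route as the paper's: the outer loop over $Y$ runs $m$ times, Theorem~\ref{mordel} bounds the inner while-loop by $p$ iterations, the $\mathcal{O}(m\log m)$ sort and the final $\mathcal{O}(m)$ loop are absorbed into $\mathcal{O}(mp)$, and the stored structures have size $m$. The one place you go beyond the paper is the set $B := [0,p-1]$, which as written in the pseudocode occupies $\mathcal{O}(p)$ space and which the paper's space argument silently ignores (it asserts the algorithm ``only stores sets of size $m$''); your observation that $B$ can be replaced by a single sweeping counter is exactly the fix needed to make the $\mathcal{O}(m)$ space claim literally true, so your write-up is, if anything, slightly more careful than the original.
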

\begin{proof}
In Algorithm~\ref{algo_sbox_1} there is for-loop of size $m$ over the elements of $Y$, which has a nested while-loop to compute the subset $A$ of the MEC such that the points in $A$ has $y$-coordinate in $Y$.
This step is necessary to compute the ordered $(m,p)$-complete set $(Y^*, \tilde\prec)$ due to $Y$ and $\prec$.
Note that the nested while-loop will iterate for at most $p$-times, since by Theorem~\ref{mordel}, for each $y \in [0, p-1]$ there is a unique $x \in [0, p-1]$ such that $(x^3 +b - y^2 \equiv 0) \pmod p$.
Thus, this for-loop and while-loop take $\mathcal{O}(mp)$ time in the worst case, while the sorting of $Y^*$ takes $\mathcal{O}(m\log m)$ time.
Finally, there is another independent for-loop of size $m$ to compute the sequence $\pi$ which takes $\mathcal{O}(m)$ time.
Thus, Algorithm~\ref{algo_sbox_1} takes $\mathcal{O}(mp) + \mathcal{O}(m\log m) + \mathcal{O}(m)$ time to execute in the worst case.
By using the fact that $mp > m \log p$, since $\log{p} < p$ and $m \leq p$ and by the property of $\mathcal{O}$ notation, the time complexity of Algorithm~\ref{algo_sbox_1} is $\mathcal{O}(mp)$.
Furthermore, Algorithm~\ref{algo_sbox_1} only stores sets of size $m$, and therefore its space complexity is $\mathcal{O}(m)$.
This completes the proof.
\end{proof}
Next we present another algorithm for the generation of $(m,p)$-complete S-boxes on a fixed MEC.
For this we prove the following results.

For a fixed ordered MEC $(E_{p,b}, \prec)$, a positive integer $m \leq p$ and an integer $0 \leq k \leq m - 1$, let Num$(E_{p,b}, \prec, m, k)$ denote the total number of $(m,p)$-complete S-boxes, possibly with repetition, generated due to the ordered MEC, $m$ and $k$.
\begin{lemma}\label{LL}
For a fixed ordered MEC $(E_{p,b}, \prec)$ and a positive integer $m \leq p$,  the total number of $(m,p)$-complete S-boxes, possibly with repetition, generated due to the MEC is equal to $m(q+1)^{r}q^{m-r}$, where $p=mq+r$, $0\leq r< m,$ and $0\leq k \leq m-1$.
\end{lemma}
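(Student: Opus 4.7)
The plan is to count the total number of $(m,p)$-complete S-boxes $\sigma(p,b,\prec,Y,k)$, possibly with repetition, by counting the number of valid pairs $(Y,k)$ that parametrize them for the fixed ordered MEC $(E_{p,b},\prec)$. Since $\sigma(p,b,\prec,Y,k)$ is determined by the choice of an $(m,p)$-complete set $Y$ together with an integer $k \in [0,m-1]$, the count decomposes as a product: (number of $(m,p)$-complete sets) $\times$ (number of choices for $k$). The second factor is immediately $m$, so the main task is to count $(m,p)$-complete sets.

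To count $(m,p)$-complete sets, I would partition the universe $[0,p-1]$ into the $m$ residue classes modulo $m$. By the division algorithm applied to $p = mq + r$ with $0 \leq r < m$, the residue class of $j \in [0,m-1]$ consists of the integers $j, j+m, j+2m, \ldots$ lying in $[0,p-1]$. A short case analysis shows that for $j \in [0, r-1]$ the largest element in the class is $j + qm \leq r-1+qm = p-1$, giving $q+1$ elements, while for $j \in [r, m-1]$ the element $j + qm \geq r + qm = p$ exceeds the range, so the class has exactly $q$ elements. Therefore $r$ of the residue classes contain $q+1$ integers and the remaining $m-r$ classes contain $q$ integers.

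By the definition of an $(m,p)$-complete set $Y$, the $m$ elements of $Y$ are pairwise incongruent modulo $m$, and since $|Y| = m$ equals the number of residue classes, $Y$ must select exactly one element from each residue class. The selections across distinct classes are independent, so by the multiplication principle the number of $(m,p)$-complete sets equals $(q+1)^r \cdot q^{m-r}$. Multiplying by the $m$ independent choices of $k \in [0,m-1]$ yields the claimed total $m(q+1)^r q^{m-r}$.

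The argument is essentially bookkeeping, so no step is truly hard; the only place to be careful is justifying that the $m$ values of $k$ should be treated as producing distinct contributions even though some of the resulting S-boxes may coincide. This is exactly what the phrase \emph{possibly with repetition} in the statement permits: I am counting parameter tuples $(Y,k)$, not equivalence classes of S-boxes, so no injectivity argument is required and the bound is combinatorially exact rather than an inequality.
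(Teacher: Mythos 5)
Your argument is correct and follows essentially the same route as the paper's proof: count the $(m,p)$-complete sets by noting that $r$ residue classes modulo $m$ in $[0,p-1]$ have $q+1$ elements and the remaining $m-r$ have $q$, take the product $(q+1)^r q^{m-r}$, and multiply by the $m$ choices of $k$. Your write-up is somewhat more careful than the paper's (you explicitly verify the class sizes and justify counting parameter pairs rather than distinct S-boxes), but there is no substantive difference.
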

\begin{proof}
For a fixed integer $0 \leq k \leq m-1$, it holds by the definition of $(m,p)$-complete S-box that the total number of $(m,p)$-complete S-boxes, possibly with repetition, generated due to the ordered MEC, $m$ and $k$
is equal to the number of distinct $(m,p)$-complete sets.
If $p = mq + r$, where $0 \leq r \leq m-1$, then there are $q+1$ (resp., $q$) integers $\ell$ (resp., $h$)
such that $\ell \pmod m \in [0, r-1]$ (resp., $h \pmod m \in [r, m-1]$).
Thus, to construct an $(m,p)$-complete set there are $q+1$ (resp., $q$) choices of an integers $a$ such that $a \pmod m \in [0, r-1]$ (resp., $[r, m-1]$).
This implies that there are $(q+1)^r q^{m-r}$ distinct $(m,p)$-complete sets.
Hence, the number of $(m, p)$-complete S-boxes due to the MEC is $m(q+1)^r q^{m-r}$, since $0 \leq k \leq m-1$.
\end{proof}
\begin{observation}\label{obs_set}
For any subset $F$ of an MEC $E_{p,b}$ there exists a unique subset $F'$ of either MEC $E_{p, R(p,\mathcal{C}_{1})}$ or $E_{p, R(p,\mathcal{C}_{2})}$ and a unique integer $t \in [1, (p-1)/2]$ such that for each $(x,y) \in F$ there exists a unique point $(x', y') \in F'$ for which it holds that $x\equiv t^2x'\pmod p$ and $y \equiv t^3y'\pmod p$.
\end{observation}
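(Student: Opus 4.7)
The plan is to leverage the fact, stated in the preliminaries, that isomorphism of ECs is an equivalence relation and that (since $p\equiv 2\pmod 3$) there are only two MEC classes, $\mathcal{C}_1$ and $\mathcal{C}_2$, with representatives $R(p,\mathcal{C}_1)$ and $R(p,\mathcal{C}_2)$. The MEC $E_{p,b}$ therefore sits in exactly one class $\mathcal{C}_i$, and by the stated definition of isomorphism (specialised to $a=a'=0$) there exists at least one $t\in\mathbb{F}_p^{\ast}$ with $R(p,\mathcal{C}_i)\,t^6\equiv b\pmod p$. This single class membership already produces the candidate "target" MEC and hence the candidate ambient space for $F'$, so existence is essentially immediate once the isomorphism $(x',y')\mapsto (t^2x',\,t^3y')$ is applied coordinatewise.

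The substantive step is uniqueness of $t$ in the prescribed range. The idea is to show that on $\mathbb{F}_p^{\ast}$ the sixth-power map has kernel exactly $\{\pm 1\}$: because $p\equiv 2\pmod 3$ one has $\gcd(3,p-1)=1$, whence $\gcd(6,p-1)=2$, and so $t^6\equiv s^6\pmod p$ forces $(t/s)^2\equiv 1$, i.e. $s\equiv\pm t\pmod p$. Consequently, among the (at most) two solutions of $R(p,\mathcal{C}_i)\,t^6\equiv b$, exactly one lies in $[1,(p-1)/2]$, which pins down $t$ uniquely. I expect this to be the main obstacle, as it is the only place where the hypothesis $p\equiv 2\pmod 3$ is actively used and where the seemingly arbitrary range $[1,(p-1)/2]$ is justified.

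With this unique $t$ in hand, I would define
\[
F':=\{(x',y')\in E_{p,R(p,\mathcal{C}_i)}\ :\ (t^2x'\bmod p,\ t^3y'\bmod p)\in F\},
\]
and verify that the map $\varphi_t:(x',y')\mapsto(t^2x',t^3y')$ is a bijection between $E_{p,R(p,\mathcal{C}_i)}$ and $E_{p,b}$ (this is standard: $\varphi_t$ has inverse $\varphi_{t^{-1}}$, and both preserve the defining Weierstrass equation because $t^6 R(p,\mathcal{C}_i)\equiv b$). Bijectivity of $\varphi_t$ immediately delivers the bijective correspondence between points of $F'$ and of $F$ required by the statement, and also forces $F'=\varphi_t^{-1}(F)$, which is unique once $t$ (and hence $\varphi_t$) is unique.

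Finally I would tie the uniqueness of $F'$ to the uniqueness of the class $\mathcal{C}_i$: if some other $F''\subseteq E_{p,R(p,\mathcal{C}_j)}$ with parameter $s\in[1,(p-1)/2]$ also satisfied the coordinate relation, then $E_{p,b}$ would be isomorphic to both $E_{p,R(p,\mathcal{C}_1)}$ and $E_{p,R(p,\mathcal{C}_2)}$, contradicting the partition of MECs into two distinct classes; hence $j=i$, and then the sixth-power uniqueness argument above forces $s=t$ and therefore $F''=F'$. This closes the argument.
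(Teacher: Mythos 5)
The paper states this as an \emph{Observation} and offers no proof at all, so there is nothing to compare your argument against; what you have written is a correct justification of a claim the authors leave unproved. Your key step is the right one: since $p\equiv 2\pmod 3$ gives $\gcd(6,p-1)=2$, the sixth-power map on $\mathbb{F}_p^{\ast}$ has kernel $\{\pm 1\}$, so the isomorphism parameter satisfying $t^{6}R(p,\mathcal{C}_i)\equiv b$ is determined up to sign and the range $[1,(p-1)/2]$ selects exactly one representative; uniqueness of the class $\mathcal{C}_i$ and of $F'=\varphi_t^{-1}(F)$ then follow. One small point worth making explicit: to rule out a competing pair $(F'',s)$ over the other representative curve, you need that a single point correspondence $(x,y)=(s^{2}x'',s^{3}y'')$ with $(x'',y'')\in E_{p,R(p,\mathcal{C}_j)}$ already forces $s^{6}R(p,\mathcal{C}_j)\equiv b\pmod p$ (substitute into the Weierstrass equation); this requires $F\neq\emptyset$, and for $F=\emptyset$ the uniqueness claim is vacuously false as stated, a degenerate case the paper also ignores. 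Apart from that pedantic caveat, your argument is complete.
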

\noindent
It is important to mention that for each subset $F$ such that the set of $y$-coordinates of its points is an $(m,p)$-complete set, the set of $y$-coordinates of the points of $F'$ is not necessarily be an $(m,p)$-complete set.
This is explained in Example~\ref{E2}.
\begin{example}\label{E2}
Let $F$ be a subset of $E_{11,9}$ with an $(11,10)$-complete set $Y= \{0,1,2,3,4,5,6,7,8,9\}$ of $y$-coordinates, where $m = 10$.
Then for $t = 2$, there exists $F' \subset E_{11, 1}$ with $y$-coordinates from the set $Y'=\{0,1,2,3,5,6,7,8,9,10\} $ which is not an $(11,10)$-complete set.
\end{example}
By Observation~\ref{obs_set}, we can avoid the while-loop used in Algorithm~\ref{algo_sbox_1} to find $x$-coordinate for each element $y$ in an $(m,p)$-complete set $Y$.
\begin{algorithm}[H]
	\caption{{\bf Constructing the proposed S-box using the EC isomorphism }}\label{algo_sbox_2}
\begin{algorithmic}[1]
\REQUIRE An MEC $E_{p, R(p,\mathcal{C}_{i})}$, where $i \in \{1, 2\}$, multiplicative inverse $t^{-1}$ of $t$ in $\mathbb{F}_p$, where $t \in [1, (p-1)/2]$, a total order $\prec$ on the MEC $E_{p,t^{6}R(p,\mathcal{C}_{i})}$, an $(m,p)$-complete set $Y$ and an integer $k \leq m-1$.
\ENSURE The proposed $(m,p)$-complete S-box $\sigma(p, t^6R(p, \mathcal{C}_i), \prec, Y, k)$.

\STATE $A:=\emptyset$; /*A set containing the points of $E_{p,t^6R(p,\mathcal{C}_{i})}$ with $y$-coordinates from the set $Y$*/

\FORALL {$y \in Y$}
\STATE $y':= (t^{-1})^3y$;
\STATE Find $x \in [0, p-1]$ such that $(x, y) \in E_{p,R(p,\mathcal{C}_{i})}$;
\STATE $A:= A \cup \{(t^2x, y)\}$;
\ENDFOR;
\STATE Sort $Y^*:=[0, m-1]$ w.r.t. the element of $A$;
\STATE $\pi := (\pi(0), \pi(1), \ldots, \pi(m-1))$;
\FORALL {integer $i \in [0, m-1]$}
	\STATE $\pi(i) := y_{(i+k)\Mod{m}}\pmod{m}$, where $y_{(i +k)\pmod{m}}$ is the $(i +k)\pmod{m}$-th element of the ordered $(m,p)$-complete set $(Y^*, \tilde\prec)$
\ENDFOR;
\STATE Output $\pi$ as the $(m,p)$-complete S-box $\sigma(p, t^6R(p, \mathcal{C}_i), \prec,$ $ Y, k)$.
	
\end{algorithmic}
\end{algorithm}
\begin{lemma}
For an ordered MEC $(E_{p,b}, \prec)$, where $b= t^6R(p,\mathcal{C}_{i})$ for some $t \in [1, (p-1)/2]$ and $i \in \{1, 2\}$, an $(m,p)$-complete set $Y$ and a non-negative integer $k\leq m-1$, the $(m,p)$-complete S-box $\sigma(p, b, \prec, Y, k)$ can be computed in $O(m\log m)$ time and $\mathcal{O}(m)$ space by using Algorithm~\ref{algo_sbox_2}.
\end{lemma}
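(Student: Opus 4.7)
The plan is to mirror the accounting used for Lemma~\ref{imp_1}, but to show that the isomorphism reduction captured in Observation~\ref{obs_set} collapses the $\mathcal{O}(p)$ per-iteration search of Algorithm~\ref{algo_sbox_1} to $\mathcal{O}(1)$, trimming the dominant term from $\mathcal{O}(mp)$ to $\mathcal{O}(m)$ while leaving sorting as the new bottleneck.

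Concretely, I would walk through the three stages of Algorithm~\ref{algo_sbox_2}. For the outer for-loop (lines~2--6), each iteration consists of: (i) one modular multiplication to form $y' = (t^{-1})^3 y$, (ii) recovery of the unique $x$ with $(x, y') \in E_{p, R(p, \mathcal{C}_i)}$, and (iii) an application of the isomorphism factor $t^2$ before inserting $(t^2x, y)$ into $A$. Step~(ii) is the delicate point: since $p \equiv 2 \pmod 3$ forces $\gcd(3, p-1) = 1$, cubing is a bijection on $\mathbb{F}_p$, so $x$ is the unique cube root of $(y')^2 - R(p, \mathcal{C}_i)$, recoverable in constant field-operation cost (either by a single modular exponentiation with the fixed exponent $3^{-1} \bmod (p-1)$, or by a one-time lookup against the reference curve $E_{p, R(p, \mathcal{C}_i)}$ whose preparation is independent of $Y$, $t$, and $k$). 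Summed over the $m$ iterations this contributes $\mathcal{O}(m)$.

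The remaining two stages are routine. Sorting $Y^* = [0, m-1]$ with respect to $\tilde\prec$ uses only comparisons that inspect two already-stored pairs in $A$ and is therefore $\mathcal{O}(1)$ per comparison, so any comparison sort finishes in $\mathcal{O}(m \log m)$. The final for-loop (lines~9--11) scans $Y^*$ once to write out $\pi$ and runs in $\mathcal{O}(m)$. Adding the three contributions gives a total time of $\mathcal{O}(m) + \mathcal{O}(m \log m) + \mathcal{O}(m) = \mathcal{O}(m \log m)$. The space bound follows because the only persistent structures are $A$, $Y^*$ and $\pi$, each of cardinality $m$, yielding $\mathcal{O}(m)$.

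The main obstacle I anticipate is pinning down what ``constant-time recovery of $x$'' means rigorously. The claim rests on the cost model for $\mathbb{F}_p$ arithmetic: if field operations are treated as unit cost (the convention tacitly used in Lemma~\ref{imp_1}) the bound is immediate, whereas under a bit-complexity model one would need either to absorb a $\log p$ factor into the analysis or to appeal to an amortized precomputation on the two reference curves $E_{p, R(p, \mathcal{C}_1)}$ and $E_{p, R(p, \mathcal{C}_2)}$, which by the remark following Theorem~\ref{mordel} are the only MEC classes when $p \equiv 2 \pmod 3$. I would state the cost model explicitly at the start of the proof and then proceed as above.
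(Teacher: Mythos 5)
Your proposal is correct and follows essentially the same route as the paper: an $\mathcal{O}(m)$ for-loop with constant-time recovery of the $x$-coordinate on the reference curve, followed by an $\mathcal{O}(m\log m)$ sort that dominates, plus $\mathcal{O}(m)$ space for the stored sets. The paper justifies the $\mathcal{O}(1)$ step only by citing Theorem~\ref{mordel} (uniqueness of each $y$-coordinate on the MEC), so your explicit account of how that lookup is realized --- the unique cube root of $(y')^{2}-R(p,\mathcal{C}_{i})$ via the fixed exponent $3^{-1}\bmod (p-1)$, or a precomputed table for $E_{p,R(p,\mathcal{C}_{i})}$ --- together with your remark on the underlying cost model is, if anything, more careful than the original.
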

\begin{proof}
There is a for-loop over the set $Y$ of size $m$ for finding $x$-coordinate for each element $y \in Y$ over the MEC $E_{p,t^{6}R(p,\mathcal{C}_{i})}$.
Note that at line 4 of Algorithm~\ref{algo_sbox_2}, $x$ can be computed in constant time, i.e., $\mathcal{O}(1)$.
This is due to Theorem~\ref{mordel} the MEC $E_{p,b}$ has each element of $[0, p-1]$ uniquely as $y$-coordinate. 
Thus, the for-loop over $Y$ can be computed in $\mathcal{O}(m)$.
The remaining part of Algorithm~\ref{algo_sbox_2} takes $O(m\log m)$ time.
Hence, with the aid of the property of $\mathcal{O}$ notion,  Algorithm~\ref{algo_sbox_2} takes $O(m\log m)$ time.
Moreover, Algorithm~\ref{algo_sbox_2} stores only a set of size $m$, other than inputs, and therefore its space complexity is  $\mathcal{O}(m)$.
\end{proof}
Note that using Algorithm~\ref{algo_sbox_2} is practical, since Lemma~\ref{LL} implies that for a given ordered MEC $(E_{p,b}, \prec)$ we can generate a large number of $(m,p)$-complete S-boxes.
However, $E_{p, R(p,\mathcal{C}_{i})}$, where $i \in \{1, 2\}$, $R(p, \mathcal{C}_i)$ and $t^{-1}$ for $t \in [0, (p-1)/2]$ should be given as input for Algorithm~\ref{algo_sbox_2}.
We know that $R(p,\mathcal{C}_{1}) = 1$, now the next important question is how to find the representative $R(p,\mathcal{C}_{2})$ for the class $\mathcal{C}_{2}$ of MECs.
For this we prove the following results.
\begin{lemma}\label{x_zero}
An MEC $E_{p,b}$ is an element of the class $\mathcal{C}_1$ if and only if there exists an integer $y \in [1, p-1]$ such that $(0, y) \in E_{p,b}$.
\end{lemma}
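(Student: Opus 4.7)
The plan is to reduce both sides of the biconditional to the single statement ``$b$ is a quadratic residue modulo~$p$'' and conclude by transitivity.

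First, I would translate the right-hand condition. A point of the form $(0,y)$ lies on $E_{p,b}$ exactly when $y^2 \equiv 0^3 + b \equiv b \pmod{p}$. Hence the existence of some $y \in [1,p-1]$ with $(0,y) \in E_{p,b}$ is equivalent to $b$ being a nonzero quadratic residue modulo~$p$ (note that $b \neq 0$ by the definition of an MEC, so we do not have to worry about the trivial case $y = 0$).

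Second, I would translate the left-hand condition using the isomorphism criterion recalled just before Theorem~\ref{mordel}. Since $\mathcal{C}_1$ contains $E_{p,1}$ and both curves have $a = a' = 0$, we have $E_{p,b} \in \mathcal{C}_1$ if and only if there exists a nonzero $t \in \mathbb{F}_p$ with $1 \cdot t^6 \equiv b \pmod{p}$, i.e.\ $b$ is a nonzero sixth power modulo~$p$.

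The main step, and what I expect to be the heart of the argument, is to show that in our setting ``sixth power'' and ``quadratic residue'' coincide. The key observation is that, because the paper restricts to primes $p \equiv 2 \pmod{3}$, we have $\gcd(3, p-1) = 1$, so the cubing map $x \mapsto x^3$ is a bijection on $\mathbb{F}_p^*$; consequently every element of $\mathbb{F}_p^*$ is a cube. One direction is automatic: $b = t^6 = (t^3)^2$ is always a square. For the converse, assume $b \equiv s^2 \pmod p$ with $s \in \mathbb{F}_p^*$; by the bijectivity of cubing there exists $t \in \mathbb{F}_p^*$ with $t^3 \equiv s \pmod p$, and then $t^6 \equiv s^2 \equiv b \pmod{p}$. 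Combining this equivalence with the two translations above yields the lemma.
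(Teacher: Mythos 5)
Your proof is correct and follows essentially the same route as the paper's: both reduce the statement to the fact that a point $(0,y)$ lies on $E_{p,b}$ iff $b$ is a square, and that $E_{p,b}\in\mathcal{C}_1$ iff $b$ is a sixth power, these two conditions coinciding because cubing is a bijection on $\mathbb{F}_p^*$ when $p\equiv 2\pmod 3$. If anything, your write-up is slightly more complete: the paper's converse direction jumps from ``$b\not\equiv(t^3)^2$ for all $t$'' and ``$b\equiv y^2$'' to a contradiction without stating the surjectivity of the cubing map, which you supply explicitly via $\gcd(3,p-1)=1$.
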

\begin{proof}
Consider the MEC $E_{p, 1}$.
Then for $y = 1$ the equation $x^3 + 1 \equiv 1 \pmod p$ is satisfied by $x = 0$.
This implies that $(0, 1) \in E_{p, 1}$, and hence the required statement is true for the MEC $E_{p, 1}$.
Let $E_{p, b} \in \mathcal{C}_1$, where $b \in [2, p-1]$.
Then there exists an isomorphism parameter $t \in [1, (p-1)/2]$ between $E_{p, 1}$ and $E_{p, b}$ such that $(t^20, t^31) = (0, t^3) \in E_{p, b}$.
%
%
%
Hence, for each MEC $E_{p,b} \in \mathcal{C}_1$ there exists an integer $y \in [1, p-1]$ such that $(0, y) \in E_{p,b}$.

To prove the converse, suppose on contrary that there is an MEC $E_{p,b}$ with a point $(0,y)$ for some $y \in [1, p-1]$ and $E_{p,b} \notin \mathcal{C}_1$.
This implies that there does not exist an integer $t \in  [1, (p-1)/2]$ such that $b \equiv t^6 \pmod p$.
Thus, $b \not\equiv (t^3)^2 \pmod p$ for all $t \in   [1, (p-1)/2]$.
But it follows from $(0,y) \in E_{p,b}$ that $b \equiv y^2 \pmod p$ for some $y \in [1, (p-1)/2]$ which is a contradiction.
Hence $E_{p,b} \in \mathcal{C}_1$.
\end{proof}

\begin{lemma}\label{class_2_rep}
For a prime $p$, the representative $R(p,\mathcal{C}_2)$ of the class $\mathcal{C}_2$ is a QNR integer in the field $\mathbb{F}_{p}$.
\end{lemma}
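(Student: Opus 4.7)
The plan is to derive the QNR property of $R(p,\mathcal{C}_2)$ directly from Lemma~\ref{x_zero}, which has just characterized membership in $\mathcal{C}_1$ by the existence of a point with $x$-coordinate $0$. The argument is essentially a contrapositive: since $R(p,\mathcal{C}_2)$ is by definition a nonzero element of $\mathbb{F}_p$ whose associated MEC lies in $\mathcal{C}_2$ and hence \emph{not} in $\mathcal{C}_1$, Lemma~\ref{x_zero} forbids the existence of any $y \in [1,p-1]$ with $(0,y) \in E_{p,R(p,\mathcal{C}_2)}$.

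First I would unwind what it means for a point $(0,y)$ to lie on $E_{p,R(p,\mathcal{C}_2)}$. Substituting $x = 0$ into the defining congruence $y^2 \equiv x^3 + R(p,\mathcal{C}_2) \pmod{p}$ reduces this to $y^2 \equiv R(p,\mathcal{C}_2)\pmod{p}$. Thus the nonexistence of such a $y$, guaranteed by the contrapositive of Lemma~\ref{x_zero}, is exactly the statement that $R(p,\mathcal{C}_2)$ has no square root in $\mathbb{F}_p$.

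Next I would verify that $R(p,\mathcal{C}_2)$ is nonzero, which is immediate from the paper's convention that the parameter $b$ of an MEC is a positive element of $\mathbb{F}_p$ (so $R(p,\mathcal{C}_2) \in [1,p-1]$). Combining this with the previous step, $R(p,\mathcal{C}_2)$ is a nonzero element of $\mathbb{F}_p$ that is not a square, which is precisely the definition of a quadratic non-residue given in Section~\ref{Prel}.

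There is no real obstacle here; the entire argument is packaged inside Lemma~\ref{x_zero} and the definitions of QNR and of representative. The only minor care point is to argue by contradiction cleanly, namely to assume $R(p,\mathcal{C}_2)$ were a QR, obtain a nonzero $y$ with $y^2 \equiv R(p,\mathcal{C}_2) \pmod p$, conclude $(0,y) \in E_{p,R(p,\mathcal{C}_2)}$, and then invoke Lemma~\ref{x_zero} to force $E_{p,R(p,\mathcal{C}_2)} \in \mathcal{C}_1$, contradicting the fact that $\mathcal{C}_1$ and $\mathcal{C}_2$ are distinct equivalence classes.
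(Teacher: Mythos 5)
Your argument is correct and coincides with the paper's own proof: both assume toward a contradiction that the representative is a QR, produce a point $(0,y)$ on the curve from a square root of $b$, and invoke Lemma~\ref{x_zero} to place the curve in $\mathcal{C}_1$, contradicting membership in $\mathcal{C}_2$. No substantive differences.
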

\begin{proof}
Let $E_{p,b} \in \mathcal{C}_2$.
Suppose on contrary that $b$ is a quadratic integer in the field $\mathbb{F}_p,$ i.e., $b \equiv y^2 \pmod p$ for some integer $y \in [1, p-1]$.
It follows from the equation $x^3 + b \equiv y^2 \pmod p$ that $(0, y) \in E_{p,b}$.
By Lemma~\ref{x_zero}, it holds that $E_{p,b} \in \mathcal{C}_1$, which is a contradiction to our assumption.
So, $b$ is a QNR, and hence $R(p,\mathcal{C}_2)$ is a QNR.
\end{proof}
Euler's Criterion is a well-known method to test if a non-zero element of the field $\mathbb{F}_p$ is a QR or not.
We state this test in Lemma~\ref{EC}.
\begin{lemma}\label{EC}
\textnormal{\cite[p.~1797]{Sze}} An element $q\in\mathbb{F}_{p}$ is a QR  if and only if $q^{(p-1)/2}\equiv -1 \pmod p$.
\end{lemma}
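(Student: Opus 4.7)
The statement as typeset appears to contain a sign typo: the standard Euler criterion characterises quadratic residues by $q^{(p-1)/2}\equiv +1\pmod p$, whereas $-1$ characterises quadratic non-residues. Either the ``$-1$'' should read ``$+1$'', or ``QR'' should read ``QNR''. Since the only use of this lemma in the paper is within Lemma~\ref{class_2_rep}, where a QNR test is actually required, I will sketch the proof of the standard criterion in both forms; the QNR half is what is actually invoked downstream.

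The plan rests on two classical ingredients: Fermat's little theorem, and the fact that a polynomial of degree $d$ over a field has at most $d$ roots. First I would note that for every nonzero $q\in\mathbb{F}_p$,
\[
\bigl(q^{(p-1)/2}\bigr)^2 = q^{p-1}\equiv 1\pmod p
\]
by Fermat, so $q^{(p-1)/2}$ is a root of $X^2-1=(X-1)(X+1)$; since $\mathbb{F}_p$ is a field this forces $q^{(p-1)/2}\in\{+1,-1\}$.

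For the forward direction of the standard criterion, if $q\equiv \beta^2\pmod p$ with $\beta\in\mathbb{F}_p^*$, then $q^{(p-1)/2}\equiv \beta^{p-1}\equiv 1\pmod p$, again by Fermat; so every QR satisfies $q^{(p-1)/2}\equiv +1$. For the converse I would count: the squaring map on $\mathbb{F}_p^*$ is two-to-one (because $\beta^2=\gamma^2$ forces $\beta\equiv\pm\gamma\pmod p$), so the set of QRs has cardinality exactly $(p-1)/2$. Each QR is a root of $X^{(p-1)/2}-1$, and that polynomial has at most $(p-1)/2$ roots in $\mathbb{F}_p$; therefore the QRs are precisely its roots. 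Hence $q^{(p-1)/2}\equiv +1\pmod p$ iff $q$ is a QR, and combined with the dichotomy above, $q^{(p-1)/2}\equiv -1\pmod p$ iff $q$ is a QNR. The latter equivalence is the form actually needed for Lemma~\ref{class_2_rep}, and if one reads the lemma statement with QR replaced by QNR it is correct as written.

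The main obstacle is interpretive rather than mathematical: one must decide how to read the typo so that the lemma aligns with its use. Once that is resolved, the proof is elementary and self-contained, resting only on Fermat's little theorem and the root-count bound for polynomials over a field; no deeper structural facts about $\mathbb{F}_p^*$ (such as its cyclicity) are required.
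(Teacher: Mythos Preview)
Your diagnosis of the sign typo is correct, and your proof of Euler's criterion is the standard elementary one (Fermat plus the root-count bound), so the mathematics is fine. However, there is nothing to compare against: the paper does not prove this lemma at all. It is simply quoted from~\cite{Sze} with no argument supplied, so you have gone beyond what the paper does by actually furnishing a proof.

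One small correction on context: the lemma is not used \emph{within} the proof of Lemma~\ref{class_2_rep}. That lemma is proved independently, and Euler's criterion is then stated immediately afterwards as a practical test for recognising a QNR (and hence a representative $R(p,\mathcal{C}_2)$). This does not affect your mathematical argument, but your remark about ``the only use of this lemma'' slightly misdescribes the paper's structure.
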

\section{Security Analysis and Comparison}\label{Anal}
In this section, a detailed analysis of the proposed S-box is performed.
Most of the cryptosystems use $8 \times 8$ S-boxes and therefore, we use $8 \times 8$ $(256, 525211)$-complete S-box $\sigma(52511, 1, N, Y, 0)$ given in Table~\ref{SN} generated by the proposed method for experiments.
The cryptographic properties of the proposed S-box are also compared with some of the well-known S-boxes developed by different mathematical structures.
\subsection{Linear Attacks}
Linear attacks are used to exploit linear relationship between input and output bits.
A cryptographically strong S-box is the one which can strongly resist linear attacks. The resistance of an S-box against linear attacks is evaluated by well-known tests including non-linearity~\cite{Carlet}, linear approximation probability~\cite{Matsui} and algebraic complexity~\cite{Sakalli}.
For a bijective $n \times n$ S-box $S$, the non-linearity NL$(S)$, linear approximation probability LAP$(S)$ can be computed by Eqs.~(\ref{NL}) and~(\ref{LAP}), respectively, while its algebraic complexity AC$(S)$ is measured by the number of non-zero terms in its linearized algebraic expression~\cite{Lidl}.
\begin{equation}\label{NL}
{\rm NL}(S)=\min_{\alpha ,\beta ,\lambda }\{x\in \mathbb{F}_{2}^{n}:\alpha \cdot
S(x)\neq \beta \cdot x\oplus \lambda \},
\end{equation}
\begin{equation}\label{LAP}
\begin{split}
&{\rm LAP}(S)=\frac{1}{2^{n}}\Big\{\max_{\alpha ,\beta }\big\{|\#\{x\in \mathbb{F}_{2}^{n} \mid \\
 & \hspace{4cm}x \cdot \alpha=S(x) \cdot \beta \}-2^{n-1}|\big \}\Big \},
\end{split}
\end{equation}
where $\alpha \in \mathbb{F}_{2}^{n}${\normalsize , }$\lambda \in \mathbb{F}_{2}$%
{\normalsize , }$\beta \in \mathbb{F}_{2}^{n}\backslash \{0\}${\normalsize \ and
\textquotedblleft }$\cdot ${\normalsize \textquotedblright\ represents the inner
product over }$\mathbb{F}_{2}.$

An S-box $S$ is said to be highly resistive against linear attacks if it has NL close to $2^{n-1} - 2^{(n/2)-1}$, low LAP and AC close to $2^n-1$.

The experimental results of NL, LAP and AC of the proposed S-box $\sigma(52511, 1, N, Y, 0)$ and some of the well-known S-boxes are given in Table~\ref{allanalysis}.
Note that the proposed S-box has NL, LAP and AC close to the optimal values.
The $\rm NL$ of $\sigma(52511, 1, N, Y, 0)$ is greater than that of the S-boxes in~\cite{Asif, ikram, umar2, Azam, YW, GT, Jaki, Ozkaynak2,Bhattacharya,IChing,YWang,AGautam,Shi} and equal to that of~\cite{AES}.
The $\rm LAP$ of $\sigma(52511, 1, N, Y, 0)$ is less than that of the S-boxes in~\cite{Asif, ikram, umar2, Azam, YW, GT, Jaki, Ozkaynak2,Bhattacharya,YWang,IChing,AGautam,Shi},  and the AC of $\sigma(52511, 1, N, Y, 0)$ attains the optimal value, which is $255$.
Thus the proposed method is capable of generating S-boxes with optimal resistance against linear attacks as compared to some of the existing well-known S-boxes.
\subsection{Differential Attack}
In this attack, cryptanalysts try to approximate the original message by observing a particular difference in output bits for a given input bits difference.
The strength of an $n \times n$ S-box $S$ can be measured by calculating its differential approximation probability DAP$(S)$ using Eq.~(\ref{DAP}). 
\begin{equation}\label{DAP}
\begin{split}
&{\rm DAP}(S)=\frac{1}{2^{n}}\Big\{\max_{\Delta x,\Delta y}\big\{\#\{x\in \mathbb{F}_{2}^{n} \mid\\
& \hspace{3cm} S(x\oplus \Delta x)= S(x)\oplus \Delta y \}\big \}\Big \},
\end{split}
\end{equation}
where $\triangle x,$\ $\triangle y\in \mathbb{F}_{2}^{n},$ and \textquotedblleft $%
{\normalsize \oplus }$\textquotedblright\ denotes bit-wise addition over $\mathbb{F}_{2}$.

  An S-box $S$ is highly secure against differential attack if its DAP is close to $1/2^n$.
  In Table~\ref{allanalysis}, the $\rm DAP$ of $\sigma(52511, 1, N,$ $ Y, 0)$ and other existing S-boxes is given.
  Note that the DAP of the proposed S-box $\sigma(52511, 1, N, Y, 0)$ is $0.016$ which is close to the optimal value $0.0039$.
 Furthermore, it is evident from Table~\ref{allanalysis} that the DAP of the proposed S-box is less than the S-boxes in~\cite{Asif,ikram, umar2, Azam, YW, GT, Jaki, Ozkaynak2,Bhattacharya,YWang,IChing,AGautam,Shi}, and hence the proposed S-box scheme can generate S-boxes with high resistance against differential attack.
\subsection{Analysis of Boolean Functions}
It is necessary to analyze the boolean functions of a given S-box to measure its confusion/diffusion creation capability.
For an $n \times n$ S-box, strict avalanche criterion SAC$(S)$ and bit independence criterion BIC$(S)$ are used to analyze its boolean functions.
The SAC$(S)$ and the BIC$(S)$ are computed by two matrices $M(S) = [m_{ij}]$ and $B(S)=[b_{ij}]$, respectively, such that
\begin{equation}\label{sAc}
{ m_{ij}=\frac{1}{2^{n}}\left( \sum_{x\in \mathbb{F}_{2}^{n}}{ w}%
\left( {S}_{i}{ (x\oplus \alpha }_{j}{ %
)\oplus S}_{i}{ (x)}\right) \right) { ,}}
\end{equation}
and
\begin{equation}\label{bIc}
\begin{split}
&b_{ij}=\frac{1}{2^{n}}\Biggl(\sum_{\substack{ x\in \mathbb{F}_{2}^{n}\\%
1\leq r\neq i\leq n}}w\biggl(S_{i}(x \oplus \alpha_{j})\oplus S_{i}(x)\oplus \\
&\hspace{4cm}S_{r}(x+\alpha_{j})\oplus S_{r}(x)\biggr) \Biggr),
\end{split}
\end{equation}
where $w(y)$ is the hamming weight of $y$, $\alpha_{j}\in \mathbb{F}_{2}^{n}$
such that $w(\alpha_{j})=1$, $ {S}_{i}$ and ${S}_{r}$ are $i$-th and $r$-th boolean functions of $ S$, respectively, and $1\leq i,j,r \leq n$.
An S-box $S$ satisfies the SAC and the BIC if each non-diagonal entry of $M(S)$ and $B(S)$ have value close to $0.5$.
The maximum and minimum values of the SAC (resp., BIC) of the proposed S-box $\sigma(52511, 1, N, Y, 0)$ are $0.563$ and $0.438$ (resp., $0.521$ and $0.479$).
Note that these values are closed to $0.5$, and hence the proposed S-box satisfies the SAC and the BIC.
Similarly, the SAC and the BIC of some other S-boxes are listed in Table~\ref{allanalysis} and compared with the results of the proposed S-box.
It is evident from Table~\ref{allanalysis} that the proposed S-box can generate more confusion and diffusion as compared to some of the listed S-boxes.
\begin{table}[htb]
\caption{Comparison of the proposed and other existing S-boxes}
\label{allanalysis}
\centering
\resizebox{\columnwidth}{!}
{
\bgroup
\def\arraystretch{1.1}
\begin{tabular}{cccccccccc}
    \hline
    S-boxes&Type&\multicolumn{3}{c}{Linear}& DAP & \multicolumn{4}{c}{Analysis of}\\
           &of&\multicolumn{3}{c}{ Attacks}&  & \multicolumn{4}{c}{Boolean Functions}\\
    \cline{3-5}  \cline{7-10}
    &S-box&NL&LAP&AC & & SAC     &SAC     &BIC     &BIC\\
    &     &  &   &   & & (max)   &(min)   &(max)   &(min)\\
    \cline{1-10}
 Ref.~\cite{Bhattacharya}&&102  &0.133&254  &0.039	&0.562&	0.359  &0.535&0.467  \\
 Ref.~\cite{IChing}&other&108          &0.133&255&0.039    &0.563 &0.493& 0.545 &0.475\\
 Ref.~\cite{AES}&&112  & 0.062 & 09 &0.016  &0.562  &0.453  & 0.504 & 0.480 \\
 Ref.~\cite{Shi}& &108  & 0.156 & 255 &0.046  &0.502  &0.406  & 0.503 & 0.47 \\
 \cline{1-10}
 Ref.~\cite{YW}& &108 & 0.145 &255  & 0.039 &0.578  &0.406  &0.531&0.470  \\
 Ref.~\cite{GT}& &103 &0.132 &255  & 0.039 &0.570  &0.398  &0.535&0.472  \\
 Ref.~\cite{Jaki}&Chaos&100  & 0.129 &255  & 0.039 & 0.594 & 0.422 &0.525&0.477  \\
 Ref.~\cite{Ozkaynak2}&&100 &0.152  &255  &0.039  &0.586  &0.391  &0.537&0.468  \\
 Ref.~\cite{YWang}&&110  & 0.125 & 255 & 0.039 & 0.562 &0.438  &0.555  & 0.473 \\
 Ref.~\cite{AGautam}&& 74 &0.211  &253  &0.055 & 0.688 &0.109  & 0.551 & 0.402 \\
 \cline{1-10}
 Ref.~\cite{Asif}&&104&0.145&255&0.039&0.625&0.391&0.531&0.471  \\
 Ref.~\cite{umar2}&&106&0.148&254&0.039&0.578&0.437&0.535&0.464  \\
 Ref.~\cite{ikram}&EC&106&0.188&253&0.039 &0.609&0.406&0.527&0.465  \\
 Ref.~\cite{Azam}&&106&0.148&255&0.039&0.641&0.406&0.537&0.471  \\
 $\sigma(52511, 1, N, Y, 0)$&&112	&0.063& 255   &0.016&	0.563&	0.438&	0.521& 0.479 \\
    \hline
\end{tabular}
\egroup
}
\end{table}

\subsection{Distinct S-boxes}
 An S-box generator is useful to resist cryptanalysis if it can generate a large number of distinct S-boxes~\cite{Azam}.
For the parameters $p=263, b =1, m=256$ and $k=0$ the number of $(256,263)$-complete S-boxes Num$(E_{263,1}, 256, 0)$ is $128$. 
It turned out with the computational results that all of these $(256,263)$-complete S-boxes are distinct.
However this is not the case in general.

An $(m,p)$-complete S-box $\sigma(p,b,\prec,Y, k)$ is said to be a \textit{natural} $(m,p)$\textit{-complete S-box} if $Y = [0, m-1]$.
For a prime $p$ and ordering $\prec$, let $p^*$ denote the largest integer such that $p^*  \leq p-1$ and there exists at least two ordered MECs $(E_{p,b_1})$ and $E_{p,b_2}$ due to which the natural $(p^*, p)$-complete S-boxes are identical, i.e., for any fixed $m \geq p^*$ the number of natural $(m, p)$-complete S-boxes due to all ordered MECs with prime $p$, ordering $\prec$ and $k = 0$ is equal to $p-1$.
%
A plot of primes $p \in [11, 7817]$ and the integers $p^{*}$ is given in Fig.~\ref{LI}, where the underlying ordering is the natural ordering $N$.
For the orderings $D$ and $M$, such plots are similar as that of $N$.
It is evident from Fig.~\ref{LI}, that with the increase in the value of prime, there is no significant increase in the value of $p^*$ and the largest value of $p^*$ for these primes is $12$.
Hence, for each of these primes, each $m \geq 13$ and $k = 0$, we can get $p-1$ distinct natural $(m, p)$-complete S-boxes with $k = 0$.

\begin{figure}[htb!]
\includegraphics [scale=0.68]{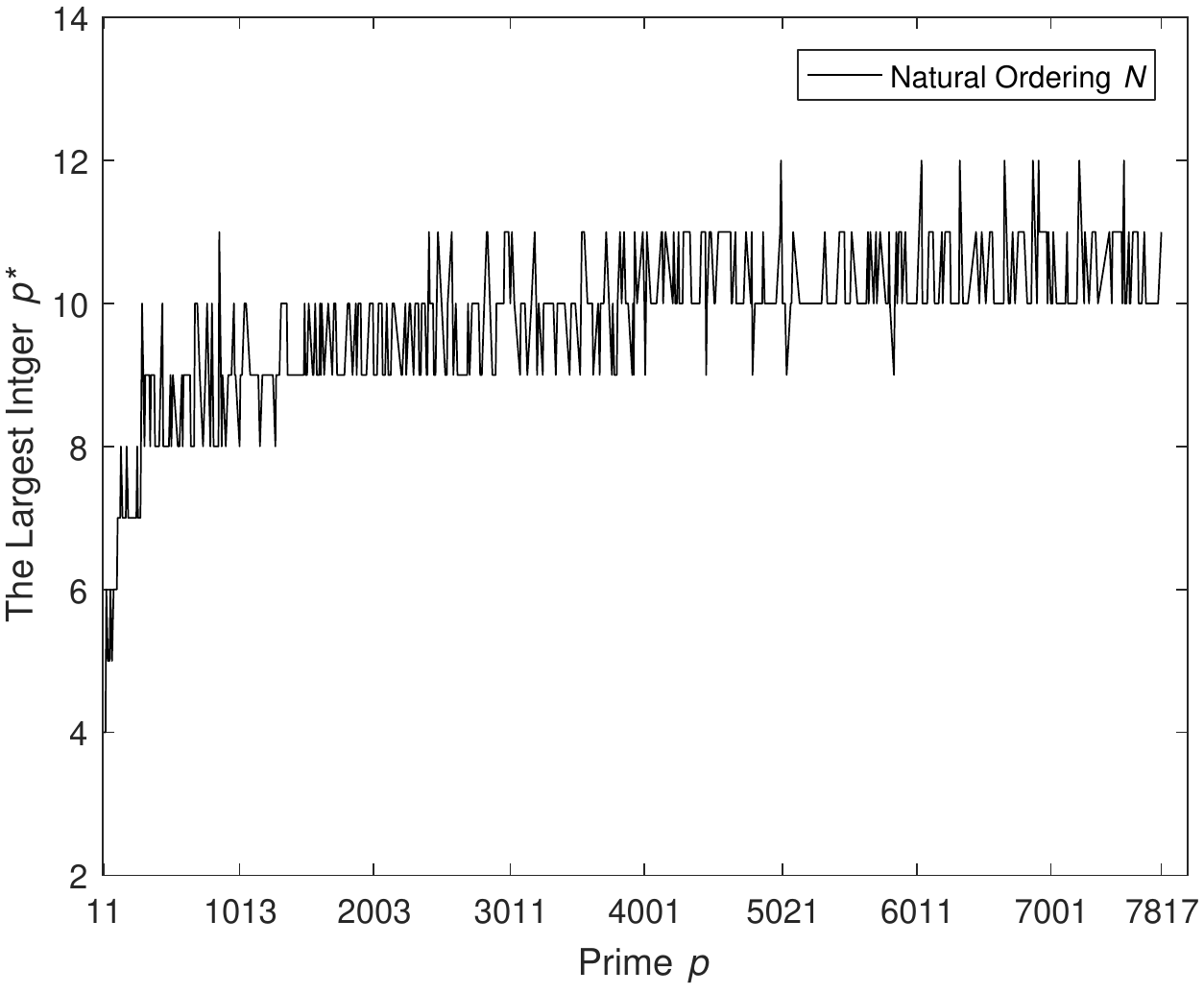}
\caption{Plot of primes $p$ and their corresponding largest integers $p^{*}$}
\label{LI}
\end{figure}
\begin{lemma}\label{class1}
Let $\prec$ be a fixed total order on all MECs in $\mathcal{C}_{1}$ such that for each MEC $E_{p,b} \in \mathcal{C}_1$ it holds that the points $(0, \pm y)$, where $-y$ is additive inverse of $y$ in $\mathbb{F}_p$, have indices from the set $\{1, 2\}$ in the sequence representation of the MEC.
Then for a fixed integer $k \in [0, m-1]$, the number of distinct natural $(m,p)$-complete S-boxes generated by all MECs in $\mathcal{C}_{1}$ are at least
\begin{equation}\label{L10}
\left\{
\begin{tabular}{ll}
$m - 1$ & \textnormal{if} $m < (p-1)/2$\\
$(p-1)/2 $ & \textnormal{otherwise}.

\end{tabular}
\right.
\end{equation}
\end{lemma}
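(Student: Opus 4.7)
The plan is to parametrize $\mathcal{C}_1$ via quadratic residues and exploit the hypothesis on $\prec$ to pin down the first element of the ordered set $(Y^*, \tilde\prec)$ for as many MECs in $\mathcal{C}_1$ as possible. Since $p \equiv 2 \pmod 3$, cubing is a bijection on $\mathbb{F}_p\setminus\{0\}$ and the sixth-power map is $2$-to-$1$, so $\mathcal{C}_1$ consists of exactly $(p-1)/2$ MECs $E_{p,b}$ with $b$ a QR; writing $b \equiv y_b^2 \pmod p$ with $y_b \in [1, (p-1)/2]$, the unordered pairs $\{y_b,\, p - y_b\}$ partition $\mathbb{F}_p\setminus\{0\}$ into $(p-1)/2$ disjoint blocks, one per MEC. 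The only points on $E_{p,b}$ with $x$-coordinate $0$ are $(0, y_b)$ and $(0, p - y_b)$, and the hypothesis places them at positions $1$ and $2$ of the sequence representation.

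I would next work inside the natural $(m,p)$-complete S-box, where $Y = Y^* = [0, m-1]$ and $\tilde\prec$ sorts the $m$ points $\{(x_y, y) \in E_{p,b} : y \in Y\}$ by $\prec$. The key consequence of the hypothesis is: whenever $y_b \in Y$ (resp.\ $p - y_b \in Y$), the special point $(0, y_b)$ (resp.\ $(0, p - y_b)$) belongs to the sorted set and, since it occupies position $\le 2$ of the full MEC, must be among the first two elements of $(Y^*, \tilde\prec)$.

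In the case $m < (p-1)/2$, the inequality $p - m + 1 > m - 1$ forces at most one of $y_b,\, p - y_b$ to lie in $[1, m-1]$, and whichever does is then forced to be $y_0$, the smallest element of $(Y^*, \tilde\prec)$. Counting the pairs $\{u, p - u\}$ meeting $[1, m-1]$ yields exactly $m - 1$ MECs in $\mathcal{C}_1$ for which $y_0$ is pinned down, and disjointness of the pairs ensures these $m - 1$ MECs produce $m - 1$ pairwise distinct values of $y_0$, hence $m - 1$ distinct orderings on $Y$. Because $\sigma(p, b, \prec, Y, k)$ for any fixed $k$ is just the cyclic shift of $(y_0, \ldots, y_{m-1})$ starting at position $k$, distinct orderings yield distinct S-boxes, proving the first case.

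For $m \ge (p-1)/2$ the same strategy applies: when $m \ge (p+1)/2$ every pair $\{y_b, p - y_b\}$ meets $[1, m-1]$, so all $(p-1)/2$ MECs in $\mathcal{C}_1$ receive distinct pinned $y_0$'s and hence produce $(p-1)/2$ distinct S-boxes. The hard part will be the exact boundary $m = (p-1)/2$, where the pair $\{(p-1)/2,\,(p+1)/2\}$ sits entirely outside $Y$ so the pinning argument misses one MEC; to salvage the claimed bound of $(p-1)/2$, I expect to need either to examine later positions $y_1, y_2, \ldots$ of the sequence or to compare the $x$-coordinate function $y \mapsto (y^2 - b)^{1/3} \bmod p$ across distinct $b$'s and show that no two orderings on $Y$ induced by distinct QRs $b$ can coincide.
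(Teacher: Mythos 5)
Your proposal follows essentially the same route as the paper's proof: both arguments pin down the $\prec$-smallest element of the ordered set $(Y^*,\tilde\prec)$ using the two points $(0,\pm y)$ guaranteed by Lemma~\ref{x_zero}, observe that a point $(0,z)$ determines $b\equiv z^2\pmod p$ so that the pairs $\{y_b,\,p-y_b\}$ attached to distinct curves of $\mathcal{C}_1$ are disjoint, and conclude that the resulting natural S-boxes differ at the fixed input whose image is that first element. Where you differ is in rigor, and the difference is in your favor. The paper's proof simply asserts $\sigma(p,b,\prec,Y,k)(k)\in\{\pm y\}$ and then jumps from $|\mathcal{C}_1|=(p-1)/2$ to the two-case bound; it never checks that at least one of $\pm y$ actually lies in $Y=[0,m-1]$ (without which $(0,\pm y)$ does not appear in $(Y^*,\tilde\prec)$ and nothing is pinned), and it never carries out the count of how many pairs $\{y_b,\,p-y_b\}$ meet $[1,m-1]$. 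Your count, $\min(m-1,\,(p-1)/2)$, is the correct yield of this argument, and your worry about the boundary $m=(p-1)/2$ is legitimate and is not resolved by the paper either: there the pair $\{(p-1)/2,\,(p+1)/2\}$ misses $Y$ entirely, so the pinning argument produces only $(p-3)/2=m-1$ distinct S-boxes, one short of the stated bound, and the paper supplies no additional argument for that one curve. The defect therefore lies in the lemma's case split (the first case should read $m\le (p-1)/2$, or the bound should be stated as $\min(m-1,(p-1)/2)$), not in your reasoning; the extra machinery you tentatively propose for that single value of $m$ (examining later positions or comparing $x$-coordinates across distinct $b$) is not present in the paper and is not needed to match what the paper actually establishes. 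A final cosmetic remark: the input at which the first element appears is $i\equiv -k\pmod m$, not $i=k$ as the paper writes, though this does not affect either argument.
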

\begin{proof}
Let $E_{p,b}$ be an MEC in $\mathcal{C}_1$, where $b \in [1, p-1]$.
Then by Lemma~\ref{x_zero}, $(0, y) \in E_{p,b}$ for some $y \in [1, p-1]$.
Further by the fact that if $(x, z) \in E_{p,b}$ then $(x, -z) \in E_{p,b}$, where $-z$ is the additive inverse of $z$ in the field $\mathbb{F}_p$, implies that $(0, \pm y) \in E_{p, b}$.
Moreover, by the group theoretic argument exactly one of the integers $y$ and $-y$ belongs to the interval $[0, (p-1)/2]$.
Hence, for a fixed $k \in [0, m-1]$ and the natural $(m,p)$-complete S-box $\sigma(p, b, \prec, Y, k)$ it holds that $\sigma(p, b, \prec, Y, k)(k)\in \{\pm y\}$ if $(0, \pm y)$ have indices from the set $\{1,2\}$ in the sequence representation of $E_{p,b}$.
Note that a point $(0,z)$ cannot appear on two different MECs $E_{p,b_1}$ and $E_{p,b_2}$, otherwise this implies that $b_1 = b_2$.
Thus, for any two MECs $E_{p,b_1},E_{p,b_2}$  in $\mathcal{C}_1$ satisfying the conditions given in the lemma it holds that the natural $(m,p)$-complete S-boxes $\sigma(p, b_1, \prec, Y, k)$ and  $\sigma(p, b_2 , \prec, Y, k)$ have different images at a fixed input $k \in[0, m-1]$.
Thus $|\mathcal{C}_1| = (p-1)/2$ implies the required result.
\end{proof}
For three different primes $p$ distinct S-boxes are generated by the proposed method, and compared with the existing schemes over ECs as shown in Table~\ref{counting}. It is evident that the proposed S-box generator performs better than other schemes.
\begin{table}[htb!]
\caption{Comparison of the number of distinct $8 \times 8$ S-boxes generated by different schemes}
\label{counting}
\centering
\resizebox{\columnwidth}{!}
{
\bgroup
\def\arraystretch{1.27}
\begin{tabular}{lcclll}
\cline{3-6}
&&$p$           & 1889 & 2111 & 2141           \\ 
\cline{3-6}
&&$b$           & 1888 & 1 & 7           \\ \hline
Distinct S-boxes by the& $N$ && 32768$^\dag$ &32768$^\dag$ &32768$^\dag$   \\ \cline{2-6}
proposed method due to& $D$ && 31744$^\dag$&32704$^\dag$&30720$^\dag$    \\ \cline{2-6}
the ordering& $M$ && 15360$^\dag$ & 26748$^\dag$  & 21504$^\dag$   \\ \hline
                       & $N$ && 944 & 1055 & 1070   \\ \cline{2-6}
Distinct S-boxes by Ref.~\cite{ikram}& $D$ && 944 &1055  &1070    \\ \cline{2-6}
                       & $M$ && 944 &1055  &1070    \\ \hline
Distinct S-boxes by Ref.~\cite{umar2}&&  & 50 & 654 & 663  \\ \hline
Distinct S-boxes by Ref.~\cite{Asif}&&  & 1 & 1 & 1  \\ \hline
\end{tabular}
\egroup
}

\end{table}
\blfootnote{The number $h^{\dag}$ stands for an integer greater than $h$.}
\subsection{Fixed Point Test}
An S-box construction scheme is cryptographically good if the average number of fixed points in the constructed S-boxes is as small as possible~\cite{Azam}.  The average number of fixed points of the above generated S-boxes are shown in Table~\ref{fixed}. The experimental results indicate that the proposed S-box generator generates S-boxes with a very small number of fixed points.
Furthermore,  the average number of fixed points in the proposed S-boxes are comparable with that of the existing schemes over ECs.
\begin{table}[htb!]
\caption{Comparison of average number of the fixed points in the S-boxes generated by different schemes}
\label{fixed}
\centering
\resizebox{\columnwidth}{!}
{
\bgroup
\def\arraystretch{1.1}
\begin{tabular}{lcclll}
\cline{3-6}
&&$p$           & 1889 & 2111 & 2141           \\ 
\cline{3-6}
&&$b$           & 1888 & 1 & 7           \\ \hline
Avg.  \# fixed points by the& $N$ && 1.1298 & 1.0844 & 1.0972   \\ \cline{2-6}
proposed method due to& $D$ && 0.9471 & 0.8569  & 0.9393    \\ \cline{2-6}
the ordering& $M$ && 0.8361 & 1.1847  & 1.0025    \\ \hline
                       & $N$ && 1.77 & 0.9735 & 0.9785   \\ \cline{2-6}
Avg.  \# fixed points by Ref.~\cite{ikram}& $D$ && 1.932 &0.9716  &0.9561    \\ \cline{2-6}
                       & $M$ && 1.332 &1.0019  &1.0150    \\ \hline
Avg.  \# fixed points by Ref.~\cite{umar2}&&  & 2.04 & 0.8976 & 0.9351  \\ \hline
Avg.  \# fixed points by Ref.~\cite{Asif}&&  & 2 & 3 & 0  \\ \hline
\end{tabular}
\egroup
}
\end{table}
\subsection{Correlation Test} The correlation test is used to analyze the relationship among the S-boxes generated by any scheme. A robust scheme generates S-boxes with low correlation~\cite{Azam}.
The proposed method is evaluated by determining the correlation coefficients (CCs) of the designed S-boxes. The lower and upper bounds for their CCs are listed in Table~\ref{CC}, which reveal that the proposed scheme is capable of constructing S-boxes with very low correlation as compared to the other schemes over ECs.
\begin{table}[htb!]
\caption{\ Comparison of CCs of S-boxes generated by different schemes}
\label{CC}
\centering
\resizebox{\columnwidth}{!}
 {
{
\bgroup
\def\arraystretch{1.1}
\begin{tabular}{lcccccc}
\hline
Scheme&$p$& $b$ & Ordering &\multicolumn{3}{c}{Correlation}  \\
\cline{5-7}
& & & &Lower & Average & Upper\\
\hline
        &1889 & 1888&$N$ &-0.2685&0.0508 & 0.2753     \\
Proposed&1889 & 1888&$D$ &-0.2263&0.0523 & 0.2986      \\
        &1889 & 1888&$M$ &-0.2817&0.0506 & 0.2902     \\
\hline
        &2111 & 1&$N$ &-0.2718&0.0504 & 0.2600     \\
Proposed&2111 & 1&$D$ &-0.2596&0.0531 & 0.3025      \\
        &2111 & 1&$M$ &-0.2779&0.0507 & 0.2684     \\
\hline
        &2141 & 7&$N$ &-0.2682&0.0503 & 0.2666     \\
Proposed&2141 & 7&$D$ &-0.2565&0.0517 & 0.2890      \\
        &2141 & 7&$M$ &-0.2744&0.0503 & 0.2858     \\
\hline
                 &1889 & 1888&$N$ &-0.2782&0.0503 & 0.2756     \\
Ref.~\cite{ikram}&1889 & 1888&$D$ &-0.4637&-0.0503 & 0.2879      \\
                 &1889 & 1888&$M$ &-0.2694&0.0501 & 0.4844     \\
\hline
                 &2111 & 1&$N$ &-0.2597&0.0504 & 0.2961     \\
Ref.~\cite{ikram}&2111 & 1&$D$ &-0.3679&0.0500 & 0.3996      \\
                 &2111 & 1&$M$ &-0.2720&0.0499 & 0.3019     \\
\hline
                 &2141 & 7&$N$ &-0.2984&0.0500 & 0.3301     \\
Ref.~\cite{ikram}&2141 & 7&$D$ &-0.2661&0.0500 & 0.2639      \\
                 &2141 & 7&$M$ &-0.2977&0.0501 & 0.2975     \\
\hline
Ref.~\cite{umar2} & 1889& 1888 &-- &-0.0025&0.2322&0.9821\\                                                  \hline
Ref.~\cite{umar2} & 2111& 1 &-- &-0.2932&0.0785&0.9988\\
\hline
Ref.~\cite{umar2} & 2141& 7 &-- &-0.2723 &0.0629&0.9999\\
\hline
\end{tabular}
\egroup
}}
\end{table}
\subsection{Time and Space Complexity}
For a good S-box generator it is necessary to have low time and space complexity~\cite{Azam}.
Time and space complexity of the newly proposed method are compared with some of the existing methods in Table~\ref{TandSCom}. It follows that for a fixed prime the proposed method can generate an S-box with low complexity and space as compared to other listed schemes.
This fact makes the proposed S-box generator more efficient and practical.
\begin{table}[htb!]
\caption{Comparison of time and space complexity of different S-box generators over ECs}
\label{TandSCom}
\centering
\resizebox{\columnwidth}{!}
{\
\bgroup
\def\arraystretch{1.2}
{\begin{tabular}{cccccc}
\hline
S-box & Ref.~\cite{Asif} &Ref.~\cite{umar2} &Ref.~\cite{ikram}&\multicolumn{2}{c}{\small Proposed method} \\
\cline{5-6}
& & & &Algorithm~\ref{algo_sbox_1}&Algorithm~\ref{algo_sbox_2}\\
\hline
Time complexity & $\mathcal{O}(p^{2})$ & $\mathcal{O}(p^{2})$ & $\mathcal{O}(mp)$ &  $\mathcal{O}(mp)$  & $\mathcal{O}(m\log m)$ \\
\hline

Space complexity & $\mathcal{O}(p)$ & $\mathcal{O}(p)$ & $\mathcal{O}(m)$ &  $\mathcal{O}(m)$  &  $\mathcal{O}(m) $ \\
 \hline
\end{tabular}%
\egroup
}}
\end{table}
\section{The Proposed Random Numbers Generation Scheme}\label{RNG}
For an ordered MEC $(E_{p,b}, \prec)$, a subset $A \subseteq [0, p-1]$, an integer $m \in [1, |A|]$ and a non-negative integer $k \in [0, m-1]$, we define a sequence of \textit{pseudo random numbers} (SPRNs) $\gamma(p, b, \prec, A, m, k)$ to be a sequence of length $|A|$ whose $i$-th term is defined as $\gamma(p, b, \prec, A, m, k)(i) = y_{(i+k)\pmod m}\pmod{m}$, where $y_{(i+k)\pmod m}$ is the $(i+k)\pmod m$-th element of the ordered set $(A, \prec^*)$ in its sequence representation.\\
One of the differences in the definition of an $(m,p)$-complete S-box and the proposed SPRNs is that an $(m,p)$-complete set is required as an input for the S-box generation, since an S-box of length $m$ is a permutation on the set $[0, m-1]$.
Furthermore, Algorithm~\ref{algo_sbox_1} and~\ref{algo_sbox_2} can be used for the generation of the proposed SPRNs, however, we propose an other algorithm which is more efficient than Algorithm~\ref{algo_sbox_2} for its generation.
This new algorithm is also based on Observation~\ref{obs_set}, but there is no constraint on $A$ to be an $(m,p)$-complete set, and hence we can generate all proposed SPRNs for a given prime $p$ by using $E_{p, R(p, \mathcal{C}_i)}$, where $i \in \{1, 2\}$.

\begin{algorithm}[H]
	\caption{{\bf Constructing the proposed SPRNs using EC isomorphism}}\label{algo_sprn}
\begin{algorithmic}[1]
\REQUIRE An MEC $E_{p, R(p,\mathcal{C}_{i})}$, where $i\in \{1, 2\}$, an integer $t \in [1, (p-1)/2]$, a total order $\prec$ on the MEC $E_{p,t^{6}R(p,\mathcal{C}_{i})}$ and a subset $Y \subseteq [0, p-1]$.
\ENSURE The proposed SPRNs $\gamma(p, t^{6}R(p,\mathcal{C}_{i}), \prec, t^3Y, m, k)$.

\STATE $A:=\emptyset$; /*A set containing the points of $E_{p,t^6R(p,\mathcal{C}_{i})}$ with $y$-coordinates from the set $t^3Y$*/

\FORALL {$y \in Y$}
\STATE Find $x \in [0, p-1]$ such that $(x, y) \in E_{p,R(p,\mathcal{C}_{i})}$;
\STATE $A:= A \cup \{(t^2x, t^3y)\}$;
\ENDFOR;
\STATE Sort $A$ w.r.t. the element of the total order $\prec^*$;
\STATE $\pi := (\pi(0), \pi(1), \ldots, \pi(|A|-1))$;
\FORALL {integer $i \in [0, |A|-1]$}
	\STATE $\pi(i) := a_{(i+k)\Mod{m}}\pmod{m}$, where $a_{(i +k)\pmod{m}}$ is the $(i +k)\pmod{m}$-th element of the ordered set $(A, \prec^*)$
\ENDFOR;
\STATE Output $\pi$ as the proposed SPRN $\gamma(p, t^{6}R(p,\mathcal{C}_{i}), \prec, t^3Y, m, k)$
	
\end{algorithmic}
\end{algorithm}
\noindent
Note that the time and space complexity of Algorithm~\ref{algo_sprn} are $\mathcal{O}(|A|\log {|A}|$) and $\mathcal{O}(|A|)$ respectively as obtained for Algorithm~\ref{algo_sbox_2}.
However, Algorithm~\ref{algo_sprn} does not require $t^{-1}$ as an input parameter to compute $\gamma(p, t^{6}R(p,\mathcal{C}_{i}), \prec,$ $t^3Y, m, k)$ for which we need preprocessing.
Furthermore, Lemma~\ref{class1} trivially holds for our proposed SPRNs.
This implies that the proposed PRNG can generate a large number of distinct SPRNs for a given prime.

\section{Analysis of the Proposed SPRNs Method}\label{RA}
We applied some well-known tests to analyze the strength of our proposed SPRNs. A brief introduction to these tests and their experimental results are given below.
We used orderings $N, D$ and $M$ for these tests.
\subsection{Histogram and Entropy Test}
Histogram and entropy are the two widely used tests to measure the extent of randomness of a RNG.
For a sequence $X$ over the set of symbols $\Omega$, the histogram of $X$ is a function $f_{X}$ over $\Omega$ such that for each $w \in \Omega$, $f_{X}(w)$ is equal to the number of occurrences of $w$ in $X$.
We call $f_X(w)$, the frequency of $w$ in $X$.
A sequence $X$ has uniform histogram if all elements of its symbol set have same frequency.
The histogram test is a generalization of the Monobit test included in NIST STS~\cite{Rukhin}.
A sequence is said to be highly random if it has a uniformly distributed histogram.

Shannon~\cite{Shannon} introduced the concept of entropy.
For a sequence $X$ over the set of symbols $\Omega$,
the entropy H$(X)$ of $X$ is defined as
\begin{equation}\label{formula}
{\rm H}(X)= -\sum_{w \in \Omega}\frac{f_X(w)}{|X|}\log_{2}(\frac{f_X(w)}{|X|}).
\end{equation}
The upper bound for the entropy is log$_{2}(|\Omega|).$
The higher is the value of entropy of a sequence the higher is the randomness in the sequence.
\begin{remark}
 For any distinct $k_1, k_2 \in [0, m-~1]$, the histograms of the proposed SPRNs $\gamma(p,b,\prec,A,m, k_1)$ and $\gamma(p,b,\prec,A,m,k_2)$ are the same, and hence ${\rm H}(\gamma(p,b,\prec,$ $A,m,k_1)) = {\rm H}(\gamma(p,b,\prec,A,m,k_2))$.
 \end{remark}
In the next lemmas we discuss when the proposed SPRNs has uniformly distributed histogram and its entropy approaches the optimal value.
\begin{lemma}\label{opt}
For an $(m,p)$-complete set $A$, a positive integer $h \leq m$ such that $m = hq + r$, a non-negative integer $k \leq h$, and the SPRNs $X = \gamma(p,b,\prec,A,h,k)$ it holds that
\begin{enumerate}[label=\roman*, font=\upshape, noitemsep]
\item[(i)] \begin{equation*}
 f_X(w) = \left\{
 \begin{tabular}{ll}
$q+1$ & \textnormal{if} $w \in [0, r-1]$,\\
$q$  & \textnormal{otherwise},
 \end{tabular}
 \right.
 \end{equation*}
 \textnormal{if}  $r \neq 0$ \textnormal{and} $A = [0, m-1]$,
 \item[(ii)] \textnormal{for each} $w \in [0, h-1]$, $f_X(w) = q$ \textnormal{if}  $r = 0$.
\end{enumerate}


\end{lemma}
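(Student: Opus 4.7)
The plan is to reduce both parts of the lemma to counting the multiset $\{a \pmod h : a \in A\}$, after which each part becomes a short partition argument.

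First, I would unwind the definition of $X = \gamma(p, b, \prec, A, h, k)$ to show that its histogram $f_X$ is nothing more than the multiplicity function of the multiset $\{a \pmod h : a \in A\}$. Indeed, as $i$ ranges over $[0, |A|-1]$, the shift by $k$ merely permutes the positions of the terms, so the sequence of values of $X$ is a rearrangement of $(a_0 \pmod h, a_1 \pmod h, \ldots, a_{|A|-1} \pmod h)$, where $(a_0, a_1, \ldots, a_{|A|-1})$ is the sequence representation of $(A, \prec^*)$. Hence $f_X(w)$ equals the multiplicity of $w$ in $\{a \pmod h : a \in A\}$, which is manifestly independent of $\prec^*$ and of $k$. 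As a bonus, this reproves the preceding remark.

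For case (i), with $A = [0, m-1]$ and $r \neq 0$, I would partition $[0, m-1]$ into the $q$ complete length-$h$ blocks $[th, (t+1)h - 1]$ for $t = 0, 1, \ldots, q-1$, together with the residual block $[qh, qh + r - 1]$. Each full block maps bijectively onto $[0, h-1]$ under reduction modulo $h$, contributing $1$ to the multiplicity of every $w \in [0, h-1]$, while the residual block maps bijectively onto $[0, r-1]$, contributing $1$ to the multiplicity of each $w \in [0, r-1]$. Summing, $f_X(w) = q+1$ for $w \in [0, r-1]$ and $f_X(w) = q$ for $w \in [r, h-1]$.

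For case (ii), with $r = 0$ so that $h$ divides $m$, I would exploit that reduction modulo $h$ factors through reduction modulo $m$. Since $A$ is $(m,p)$-complete, its elements realize every residue class modulo $m$ exactly once, giving $\{a \pmod m : a \in A\} = [0, m-1]$ as sets. Composing with reduction modulo $h$, the multiset $\{a \pmod h : a \in A\}$ coincides with $\{w \pmod h : w \in [0, m-1]\}$, and in this last multiset each $w \in [0, h-1]$ appears exactly $m/h = q$ times.

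I expect the main obstacle to be the opening conceptual step: recognising that the apparent dependence of $X$ on the ordering $\prec^*$ and on the shift $k$ is completely washed out upon passing from the sequence to its histogram. Once this reduction is in place, both parts become elementary counting arguments; in part (ii), the only substantive point is invoking the $(m,p)$-complete hypothesis, together with $h \mid m$, to replace the arbitrary set $A$ by the canonical set $[0, m-1]$ for the purpose of counting residues modulo $h$.
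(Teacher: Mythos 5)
Your proof is correct and follows essentially the same route as the paper's: part (i) via the partition of $[0,m-1]$ into the $q$ full length-$h$ blocks together with a residual block of length $r$, and part (ii) by using $h \mid m$ together with $(m,p)$-completeness to reduce the count of residues of $A$ modulo $h$ to that of $[0,m-1]$. Your opening reduction --- that the histogram depends only on the multiset $\{a \bmod h : a \in A\}$ and is independent of $\prec^*$ and $k$ --- is left implicit in the paper, so making it explicit is a small clarification rather than a genuinely different approach.
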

\begin{proof}
It is trivial that the domain of the histogram of $X$ is the set $[0, h-1]$.\\
(i)~If $r \neq 0$ and $A = [0, m-1]$, then it can be easily verified that $A$ can be partitioned in $q+1$ sets $\{ih +\ell \mid 0 \leq \ell \leq h-1\}$, where $0 \leq i \leq q-1$, and $\{qh +\ell \mid 0 \leq \ell \leq r -1\}$.
This implies that for each $w \in [0, h-1]$ it holds that
\begin{equation*}
 f_X(w) = \left\{
 \begin{tabular}{ll}
$q+1$ & \textnormal{if} $w \in [0, r-1]$,\\
$q$  &\textnormal{otherwise}.
 \end{tabular}
 \right.
 \end{equation*}
(ii)~If $r = 0$, then $m = hq$.
 We know that for each $a \in A$, it holds that $a = mi + j$, where $ 0 \leq j \leq m- 1$.
Thus, with the fact that $m = hq$, it holds that
\begin{equation*}
\begin{split}
a \pmod h &= ((mi) \pmod h + j \pmod h) \\
 &= j \pmod h
 \end{split}
\end{equation*}
This implies that $\{a\pmod{m} \mid a \in A\}$ = $\{(a\pmod{m}) \pmod {h} \mid a \in A\}$.
Thus by using the same reason, we can partition $A$ into $q$ sets, since $m = hq$, and hence $f_{X}(w) = q$ for each $w \in [0, h-1]$.
This completes the proof.
\end{proof}
For the parameters given in Lemma~\ref{opt}, we can deduce that the histogram of our proposed SPRNs is either approximately uniform or exactly uniform.

\begin{corollary}
Let $A$ be an $(m,p)$-complete set, $h \leq m$ such that $m = hq + r$ be a positive integer, $k \leq m-1$ be a non-negative integer, and $X$ be the proposed SPRNs $ \gamma(p,b,\prec,$ $A,h,k)$.
It holds that

 \begin{equation}
 \rm {H}(\it {X}) = \begin{cases}
        -r(\frac{q+1}{|X|})\rm{log_{2}}\it(\frac{q+1}{|X|})- & \textnormal {if}\quad r \neq 0, A = [0, m-1],\\
        (h-r)(\frac{q}{|X|})\rm{log_{2}}\it(\frac{q}{|X|}) &\\
        \log_2(h) & \textnormal{if}\quad r = 0.
        \end{cases}
  \end{equation}
\end{corollary}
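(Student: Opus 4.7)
The plan is to obtain the corollary as a direct consequence of Lemma~\ref{opt} by substituting the explicit histogram values into the entropy formula (\ref{formula}). Since Lemma~\ref{opt} already supplies the frequency function $f_X$ in both cases, nothing new about the structure of the proposed SPRNs needs to be proved; the work is purely a bookkeeping calculation, split according to whether the remainder $r$ is zero.

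First, I would record that by hypothesis $A$ is an $(m,p)$-complete set, so $|X| = |A| = m$, and that the symbol set of $X$ is $[0,h-1]$. For case (i), where $r \neq 0$ and $A = [0,m-1]$, Lemma~\ref{opt}(i) gives $f_X(w) = q+1$ for the $r$ values $w \in [0,r-1]$ and $f_X(w) = q$ for the remaining $h-r$ values $w \in [r,h-1]$. Plugging these into (\ref{formula}) and collecting the two blocks of identical terms immediately produces the first branch of the claimed formula; as a sanity check one can verify $r(q+1) + (h-r)q = hq + r = m = |X|$, confirming that the frequencies sum correctly.

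For case (ii), where $r = 0$, Lemma~\ref{opt}(ii) gives $f_X(w) = q$ uniformly for every $w \in [0,h-1]$, and $|X| = m = hq$. Thus each relative frequency equals $q/(hq) = 1/h$, and (\ref{formula}) collapses to
\begin{equation*}
{\rm H}(X) = -\sum_{w=0}^{h-1}\frac{1}{h}\log_{2}\!\Bigl(\frac{1}{h}\Bigr) = -h \cdot \frac{1}{h} \cdot (-\log_{2} h) = \log_{2} h,
\end{equation*}
which is the second branch.

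There is no genuine obstacle; the only points that require a little care are keeping track of the two distinct values of $|X|$ (namely $hq+r$ versus $hq$) across the two cases and ensuring the entropy sum is partitioned correctly in case (i). Since both substitutions are mechanical once Lemma~\ref{opt} is invoked, the corollary follows in a few lines.
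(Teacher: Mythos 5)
Your proposal is correct and follows essentially the same route as the paper: both invoke Lemma~\ref{opt} to read off the frequencies ($q+1$ on $[0,r-1]$ and $q$ on $[r,h-1]$ when $r\neq 0$; uniformly $q$ when $r=0$) and then substitute into the entropy formula~(\ref{formula}). Your version is slightly more careful than the paper's, explicitly tracking $|X|=hq+r$ versus $|X|=hq$ and verifying that the frequencies sum to $|X|$, but the underlying argument is identical.
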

 \begin{proof}
When $r \neq 0$ and $A = [0, m-1]$, then by Lemma~\ref{opt}~(i), there are $r$ (resp., $h-r$) numbers in $[0, h-1]$ whose frequency is $q+1$ (resp., $q$), and therefore we have the result.\\
When $r = 0$, then by Lemma~\ref{opt}~(ii), all numbers in $[0, h-1]$ have frequency $q$ and there are $h$ elements in $[0, h-1]$ and hence the result.
\end{proof}

To test the efficiency of the proposed PRNG, we generated SPRNs $X_1 = \gamma(52511, 1, N, A, 127, 0)$, $X_2 = \gamma(52511, 1, N, A, 16, 0)$, where $A$ is the set given in Table~\ref{FT}, $X_3 = \gamma(101, 35, N, [0, 100], 6, 0)$ and $X_4 = \gamma(3917, 301, N, [0, 3916], 3917, 0)$.
The histogram of $X_1$ is given in Fig.~\ref{Histo1} which is approximately uniform, while by Lemma~\ref{opt}
the histograms of $X_3$ and $X_4$ are uniformly distributed.
Furthermore, the entropy of each of these SPRNs is listed in Table~\ref{tests}.
Observe that the newly generated SPRNs have entropy close to the optimal value.
Thus, by histogram and entropy test it is evident that the proposed method can generate highly random SPRNs.
Moreover, the proposed SPRNs $X_4$ are compared with the SPRNs $\mathcal{R}(3917,0,301,10,2)$ generated by the existing technique due to Hayat~\cite{umar2} over ECs.
By Lemma~\ref{opt} it holds that  $f_{X_{4}}(w) = 1$ for each $w \in [0, 3916]$, and by Fig.~\ref{Histo2}, it is clear that the histogram of $X_4$ is more uniform as compared to that of the SPRNs $\mathcal{R}(3917,0,301,10,2)$.
By Table~\ref{tests}, the entropy of $X_4$ is also higher than that of $\mathcal{R}(3917,0,301,10,2)$, and hence the proposed PRNG is better than the generator due to Hayat~\cite{umar2}.
\begin{table}[htb!]
\caption{Comparison of entropy and period of different sequence of random numbers over EC}
\label{tests}
\centering
\resizebox{\columnwidth}{!}
 {
\bgroup
\def\arraystretch{1.1}
\begin{tabular}{llllllll}
\hline
\multicolumn{1}{c}{Random sequence $X$} & Type of A   & H$(X)$ & log$_{2}(|\Omega|)$  & Period & Optimal\\
&     &     & & &period \\
\hline
$\gamma(52511,1,N,A,127,0)$&Table~\ref{FT} 	&6.6076  &6.7814&256 &256  \\
$\gamma(52511,1,N,A,16,0)$&Table~\ref{FT} &4 &4&256 &256 \\
$\gamma(101,35,N,A,6,0)$&$[0, 100]$   & 2.5846 &2.5850& 99&101\\
$\gamma(3917, 301, N, A, 3917, 0)$ & [0, 3916] & 11.9355 &11.9355&3917 &3917\\
$\mathcal{R}(3917,0,301,10,2)$ Ref.~\cite{umar2}&\multicolumn{1}{c}{--} & 10.9465&11.1536&3917&3917\\
\hline
\end{tabular}
\egroup
}
\end{table}
\begin{figure}[htb!]
\includegraphics [scale=0.65]{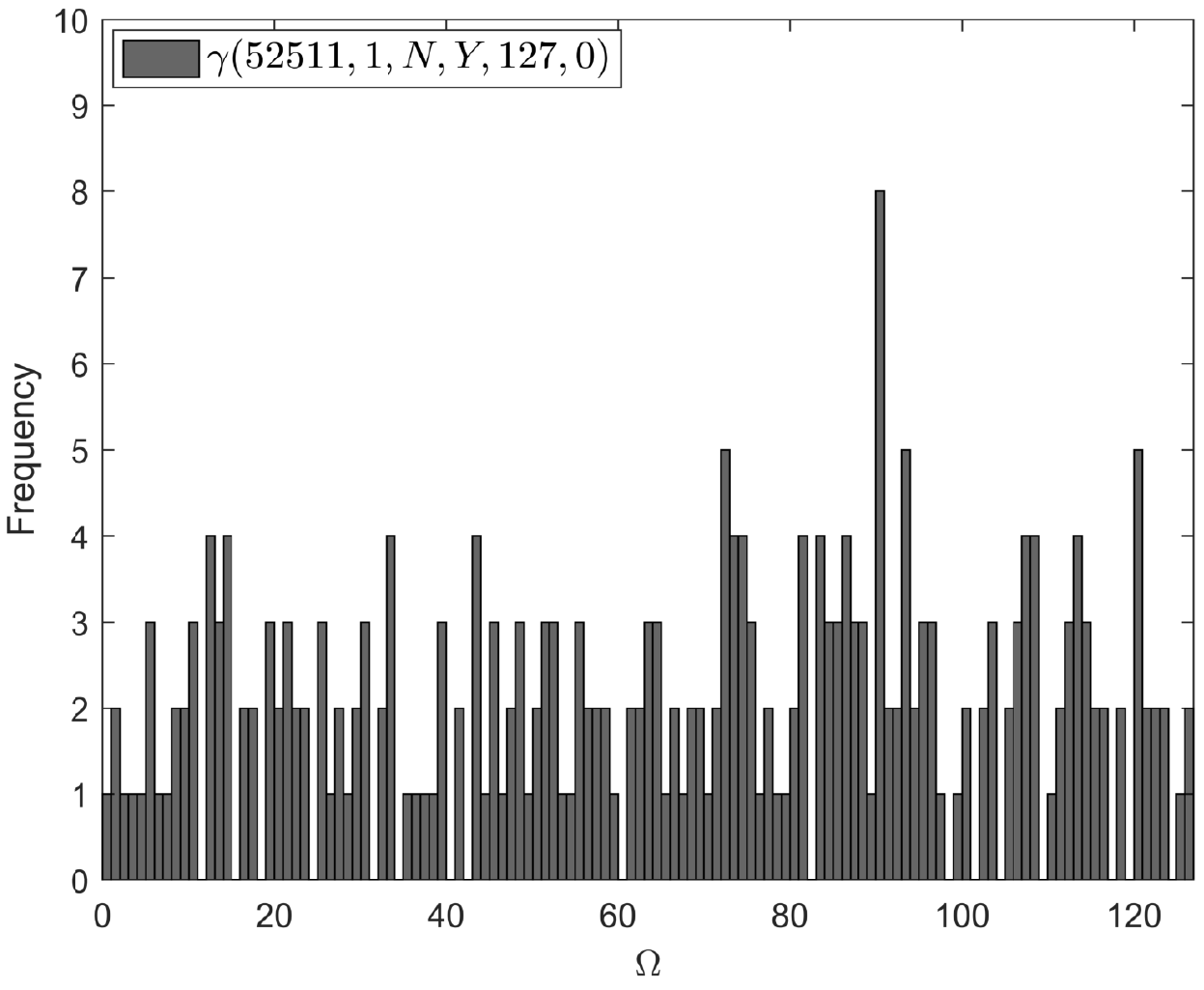}
\caption{The histogram of $\gamma(52511,1,N,Y,127,0)$}
\label{Histo1}
\end{figure}

\begin{figure}[htb!]
\includegraphics [scale=0.65]{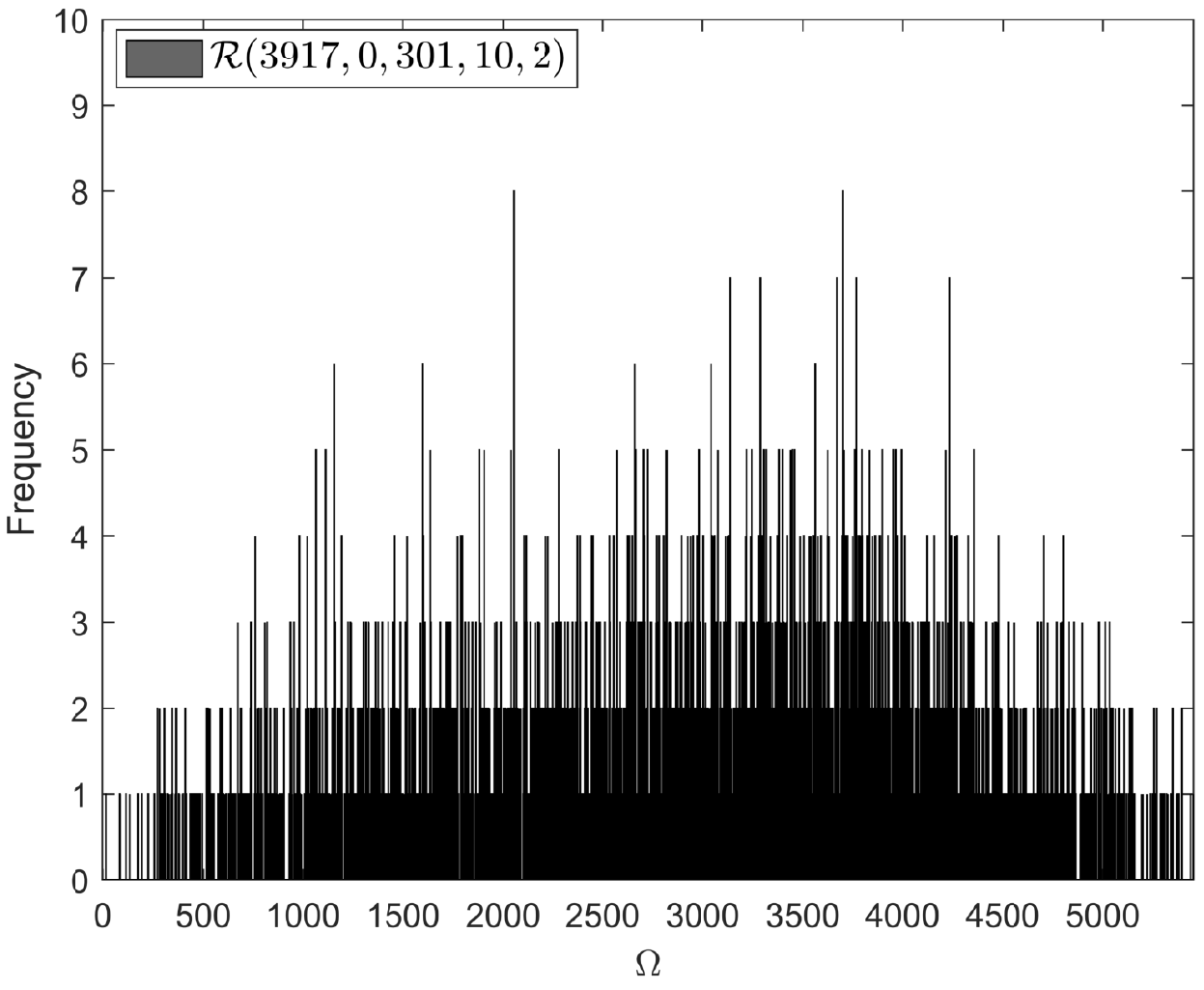}
\caption{The histogram of $\mathcal{R}(3917,0,301,10,2)$}
\label{Histo2}
\end{figure}
\subsection{Period Test}
Period test is another important test to analyze the randomness of a PRNG.
A sequence $X = \{a_{n}\}$ is said to be periodic if it repeats itself after a fixed number of integers, i.e., $\{a_{n+h}\}=\{a_{n}\}$ for the least positive integer $h$.
In this case $h$ is called the period of the sequence $X$.
The maximum period that a sequence $X$ can have is $|X|$.
The sequence $X$ is said to be highly random if its period is long enough~\cite{Marsaglia}. 
We computed the period of the proposed SPRNs $X_i, i = 1, 2, \ldots, 4$ and the SPRNs $R(3917,0,301,10,2)$ generated by the scheme proposed in \cite{umar2} and the results are listed in Table~\ref{tests}.
It is evident from Table~\ref{tests} that the proposed SPRNs have period colse to the optimal value.
Hence, the proposed PRNG can generated highly random numbers.



\subsection{Time and Space Complexity}
It is necessary for a good PRNG to have low time and space complexity.
The time and space complexity of the proposed PRNG and the generator proposed by Hayat et al.~\cite{umar2} are compared in Table~\ref{RNcom}.
Note that the time and space complexity of the proposed PRNG depend on the size of the input set, while the time and space complexity of PRNG due to Hayat et al.~\cite{umar2} are $\mathcal{O}(p^2)$ and $\mathcal{O}(p)$, respectively, where $p$ is underlying prime.
Hence, the proposed PRNG is more efficient as compared to the PRNG due to Hayat et al.~\cite{umar2}.
\begin{table}[H]
\caption{Comparison of time and space complexity of different PRNGs over ECs}
\label{RNcom}
\centering
\resizebox{\columnwidth}{!}
{\
\bgroup
\def\arraystretch{1.2}
{\begin{tabular}{cccc}
\hline
 & Input size $m$& Ref.~\cite{umar2} & Proposed method\\
 \hline
 Time complexity &\begin{tabular}{@{}l@{}}  $m < p$ \\
 													$m = p$  \end{tabular} & $\mathcal{O}(p^2)$ &$\mathcal{O}(|A|\log |A|)$\\
 \hline
 Space complexity &\begin{tabular}{@{}l@{}}  $m < p$ \\
 													$m = p$  \end{tabular} & $\mathcal{O}(p)$ &$\mathcal{O}(|A|)$\\
 \hline

\end{tabular}%
\egroup
}}
\end{table}

\section{Conclusion}\label{Con}
Novel S-box generator and PRNG are presented based on a special class of the ordered MECs.
Furthermore, efficient algorithms are also presented to implement the proposed generators.
The security strength of these generators is tested by applying several well-known security tests.
Experimental results and comparison reveal that the proposed generators are capable of generating highly secure S-boxes and PRNs as compared to some of the exiting commonly used cryptosystems in low time and space complexity.

\end{document}